%% file: TCNS_Katz_centrality_mul.tex
\documentclass[journal,twoside,web]{ieeecolor}
\usepackage{generic}
\usepackage{cite}
\usepackage[hidelinks]{hyperref}
\usepackage{amsmath,amssymb,amsfonts}
\usepackage{algpseudocode,algorithm}
\usepackage{graphicx}
\usepackage{textcomp}
\usepackage{subcaption}
\usepackage{xcolor}
\usepackage{tabularx}
\usepackage{academicons}
\usepackage{soul}
\usepackage{multirow}
\usepackage{diagbox}
\usepackage[colorinlistoftodos]{todonotes}
\usetikzlibrary{arrows,shapes.geometric,positioning}
\usepackage{mathtools}
\mathtoolsset{showonlyrefs}
\input{deflatex3}

\newcommand{\argmax}{\operatorname*{arg\,max}}
\newtheorem{Lemma}{Lemma}
\newtheorem{Theorem}{Theorem}
\newtheorem{Definition}{Definition}
\newtheorem{Assumption}{Assumption}
\newtheorem{Remark}{Remark}
\newtheorem{Problem}{Problem}
\newtheorem{Proposition}{Proposition}

\usepackage{svg}
\newcommand{\orcid}[1]{\href{https://orcid.org/#1}{\includesvg[width=8pt]{orcid}}}

\makeatletter
\newcommand{\multiline}[1]{%
  \begin{tabularx}{\linewidth-\ALG@thistlm-0.0cm}[t]{@{}X@{}}
    #1
  \end{tabularx}
}
\makeatother
\def\BibTeX{{\rm B\kern-.05em{\sc i\kern-.025em b}\kern-.08em
T\kern-.1667em\lower.7ex\hbox{E}\kern-.125emX}}
\begin{document}
\title{Katz Centrality-Based Security Allocation in Positive Networks}
\author{Anh Tung Nguyen, Sribalaji C. Anand, André M. H. Teixeira 
\thanks{This work is supported by the Swedish Research Council under the grant 2024-00185, by the Swedish Foundation for Strategic Research, and by the Knut and Alice Wallenberg Foundation.}
\thanks{Anh Tung Nguyen and Andr{\'e} M. H. Teixeira are with the Department of Information Technology, Uppsala University, PO Box 337, SE-75105, Uppsala, Sweden (e-mail: \{anh.tung.nguyen, andre.teixeira\}@it.uu.se).}
\thanks{ Sribalaji C. Anand is with the Department of Electrical and Systems Engineering, University of Pennsylvania, United States. He is also affiliated with the School of Electrical Engineering and Computer Science, KTH Royal Institute of Technology, Stockholm, Sweden (e-mail: srca@kth.se). }
}
	
\maketitle

\begin{abstract}                         
This paper deals with security allocation challenges for networked control systems represented by positive-weighted digraphs under stealthy false data injection attacks. These systems consist of interconnected subsystems, referred to as nodes in the underlying digraph, where an adversary aims to maximize network performance loss by stealthily attacking specific nodes.
Meanwhile, a defender monitors several nodes to impose stealthiness constraints on the adversary's actions, thereby minimizing the network performance loss. We analyze the worst-case network performance loss of these stealthy attacks and make the following contributions: we (i) show that the worst-case network performance loss is upper-bounded by a tractable semi-definite programming (SDP) problem; (ii) establish the relationship between the SDP problem and the Katz centrality measure of the underlying digraph under a sufficient condition, resulting in a network-size-independent optimization problem; and (iii) provide a heuristic search based on the Katz centrality measure of the underlying digraph for selecting sub-optimal monitor nodes against all admissible attack scenarios without solving optimization problems. These results offer practical insights for safeguarding large-scale networked control systems against stealthy false data injection attacks. The obtained results are validated via extensive simulations on Erdős–Rényi random graphs with different network sizes.
\end{abstract}
\begin{IEEEkeywords}
Centrality measures, cyber-physical security, networked control systems, stealthy attacks.
\end{IEEEkeywords}
\section{Introduction}
Control systems are deeply integrated into various critical infrastructures, such as power grids, transportation systems, and water distribution networks, many of which can be modeled or approximated as positive networked systems \cite{molzahn2017survey,eliades2024smart,conti2021survey}. Due to their large scale and societal importance, these systems are often divided into smaller interconnected parts to be managed efficiently. 
These parts often rely on open communication technologies to share their operating information, including public Internet and wireless networks, which leave them vulnerable to cyber threats \cite{falliere2011w32, Havex2014, kshetri2017hacking, Israel2020, miller2021looking}. The potential consequences of such vulnerabilities are both significant and far-reaching, impacting finances and public safety. Notable examples include the devastating effects of the Stuxnet malware on an Iranian industrial system in 2010 \cite{falliere2011w32}, the Havex Trojan malware on European infrastructure utilities in 2014 \cite{Havex2014}, the Industroyer attack on Ukraine’s power grid in 2016 \cite{kshetri2017hacking}, and the thwarted Triton-like malware on an Israeli water distribution network in 2020 \cite{Israel2020}. More reported attacks can be found in \cite{miller2021looking}.
As a result, ensuring the security of control systems has become a matter of critical importance. 

In an attempt to handle cyber attacks on control systems, various recent studies \cite{teixeira2015strategic,griffioen2024ensuring,murguia2020security,pirani2021game} put effort into studying security metrics that consider policies and purposes of defenders and adversaries simultaneously. Tractable security metrics can be directly applied to deal with the sensor placement problem \cite{nguyen2024security,li2018false,yuan2019stackelberg,umsonst2021bayesian, milovsevic2023strategic,dahan2022network} and the actuator protection problem \cite{anand2023risk,shukla2022robust}. More specifically, \cite{li2018false,yuan2019stackelberg,nguyen2024security} cast the security resource allocation problem within the Stackelberg game framework, aiming to identify the optimal actions for the defender and the adversary by solving a single optimization problem. This approach incorporates all admissible adversarial actions via a simplified linear mapping from action spaces to game payoffs. The studies \cite{shukla2022robust,milovsevic2023strategic,dahan2022network} adopt a different game-theoretic approach wherein both the defender and the adversary operate within discrete action spaces. This setup facilitates the application of traditional backward induction to compute the Stackelberg equilibrium \cite{shukla2022robust} or the bilevel optimization for finding the zero-sum game equilibrium \cite{milovsevic2023strategic,dahan2022network}, albeit at the cost of substantial computational complexity due to the need to evaluate all admissible actions of the players, as noted in \cite[Section V]{shukla2022robust}.

A notable gap in the existing literature \cite{li2018false,yuan2019stackelberg,shukla2022robust,umsonst2021bayesian,milovsevic2023strategic,dahan2022network} is the lack of analysis and computation of the performance loss caused by stealthy attacks on control systems. 
This issue has been partially addressed in \cite{anand2023risk, pirani2021game, pirani2020graph, pirani2021strategic,weerakkody2016graph,zhang2023structural,nguyen2024security}. Specifically, the works \cite{weerakkody2016graph,zhang2023structural} mainly study the perfectly undetectable attacks. The trade-off between the malicious impact and the detection energy is studied in \cite{pirani2021game, pirani2020graph, pirani2021strategic}. 
However, the metrics proposed in \cite{weerakkody2016graph,zhang2023structural, pirani2021game, pirani2020graph, pirani2021strategic} are not directly applicable to the study of stealthy attacks due to the presence of stealthiness constraints, which are of significant practical importance.
Our previous works \cite{anand2023risk, anand2024risk, nguyen2024security} address this gap by examining the worst-case malicious impact under stealthiness constraints through the $\Lc_2$ strategic output-to-output gain security metric \cite{teixeira2015strategic}. However, this metric poses a computational challenge when applied to large-scale networks.

The computational challenge in large-scale networks
has been addressed in our previous study \cite{nguyen2025scalable} by adopting the results of Kalman-Yakubovich-Popov (KYP) Lemma in positive systems \cite{rantzer2015kalman}. While this approach enhances computational efficiency, it lacks a graph-theoretic interpretation, e.g., set packing and set cover in \cite{milovsevic2023strategic,dahan2022network}.
On the other hand, our previous work \cite{nguyen2024security} has already provided a binary graph-theoretic answer to the security evaluation question based on dominating sets, determining whether the security metric is finite or infinite. However, a fundamental graph-theoretic property of the output-to-output gain security metric, which would enable scalable security evaluation, remains unexplored. Consequently, this paper investigates a graph-theoretic characterization of the security metric and its implications for scalable security assessment and heuristic monitor allocation.

In this paper, we consider a networked control system, associated with a positive-weighted digraph, under stealthy attacks. 
The system states are constrained to be non-negative, which reflects many real-world networks, such as transportation and water distribution networks.
The system consists of interconnected one-dimensional subsystems, known as nodes in the digraph. The adversary aims to maximally degrade the network performance  
by selecting several specific nodes to carry out stealthy false data injection attacks, targeting the information transmitted from these nodes to their neighbors. In contrast, a defender monitors the outputs of some nodes, which imposes a stealthiness condition on the adversary's actions, aiming to minimize the network performance loss. The security problem outlined above is illustrated in Fig.~\ref{fig:problem}. We study the worst-case performance loss of stealthy attacks (also henceforth referred to
as the worst-case disruption) and provide the following contributions:
\begin{itemize}
    \item The worst-case disruption is upper-bounded by a tractable SDP problem using dissipative systems theory \cite{moylan2014dissipative}. 
    \item By exploiting the positivity and holding a sufficient robustness condition, we show that the worst-case disruption exactly equals the solution to the SDP problem. Furthermore, the SDP problem is simplified into an optimization problem independent of network size, making it well-suited for large-scale networks.
    \item We introduce a necessary and sufficient geometric condition based on two adapted versions of the Katz centrality measure \cite{katz1953centrality} under which the simplified SDP problem admits a finite solution.
    \item By leveraging the above geometric condition, we provide a heuristic search based on the two adapted Katz centrality measures to find sub-optimal monitor nodes against all the admissible attack scenarios, bypassing the need for solving a minimization of the worst-case disruption.
\end{itemize}

The remainder of this paper is organized as follows.
Section \ref{sec:problem} describes a positive networked control system under stealthy attacks and the worst-case disruption.
Thereafter, Section~\ref{sec:robust_security} presents the computation of the worst-case disruption in an SDP problem. This SDP problem can then be adapted to suit large networks under a sufficient robustness condition. 
In Section~\ref{sec:graph_security}, the evaluation of the worst-case disruption is carried out by examining two adapted versions of the Katz centrality measure, leading to a geometric characterization and a heuristic search for candidate monitor nodes. 
Section~\ref{sec:simulation} validates the obtained results on Erdős–Rényi random graphs while 
Section~\ref{sec:conclusion} concludes the paper. We conclude this section by providing the notation to be used throughout this paper.

\textbf{Notation:} $\Rbb^n  (\Rbb^n_{>0},\Rbb^n_{\geq 0})$ and $\Rbb^{n \times m}$ stand for sets of real (positive, non-negative) $n$-dimensional vectors and real $n$-by-$m$ matrices, respectively; 
the set of $n$-by-$n$ symmetric positive semidefinite matrices is denoted as $\Sbb^n_{\succeq 0}$.
We denote $A \succ(\succeq) B$ if $A-B$ is a positive definite (semi-definite) matrix.
An $i$-th column of the $n$-by-$n$ identity matrix is denoted as $e_i$.
For a set $\Ac$, $|\Ac|$ stands for the set cardinality. 
The space of square-integrable  functions is defined as $\Lc_{2} \triangleq \bigl\{f: \Rbb_{>0} \rightarrow \Rbb^n | \norm{f}^2_{\Lc_2 [0,\infty]} < \infty \bigr\}$
and the extended space is defined as $\Lc_{2e} \triangleq \bigl\{ f: \Rbb_{>0} \rightarrow \Rbb^n | \norm{f}^2_{\Lc_2 [0,H]} < \infty, \forall 0 < H < \infty \bigr\} $ where $\norm{f}_{\Lc_2 [0,H]}^2 \triangleq \int_{0}^{H} \norm{f(t)}_2^2 \, \text{d}t$.
The notation $\norm{f}^2_{\Lc_2}$  is used  as shorthand for the norm $\norm{f}_{\Lc_2 [0,H]}^2$ if the time horizon $[0,H]$ is clear from the context.
Let $\Gc \triangleq (\Vc, \Ec, A, \Theta)$ be a digraph with the set of $N$ nodes $\Vc = \{1, 2,...,N\}$, the set of edges $\Ec \subseteq \Vc \times \Vc $, the adjacency matrix $A = [A_{ij}]$, and self-loop gain diagonal matrix $\Theta = \textbf{diag}([\theta_i])$ where \textbf{diag} stands for a diagonal matrix.
Each element $(i,j) \in \Ec, i\neq j$, represents a directed edge from $j$ to $i$ and 
the element $A_{ij}$ of the adjacency matrix is positive, and with $(i,j) \notin \Ec$ or $i = j$, $A_{ij} = 0$. 
The in-degree Laplacian matrix is defined as $L = [\ell_{ij}] = D_{\rm in} - A$ where $D_{\rm in} = \textbf{diag}(A \textbf{1}) + \Theta$ and $\textbf{1}$ is an all-ones vector.
Further, $\Gc$ is called a strongly connected digraph if and only if $\sum_{k = 1}^{N-1} A^k$ has no zero entry.
The set of all in-neighbors of $i$ is denoted as $\Nc_i = \{j \in \Vc| (i,j) \in \Ec \}$.

\begin{figure}[!t]
    \centering
    \includegraphics[width=0.9\linewidth]{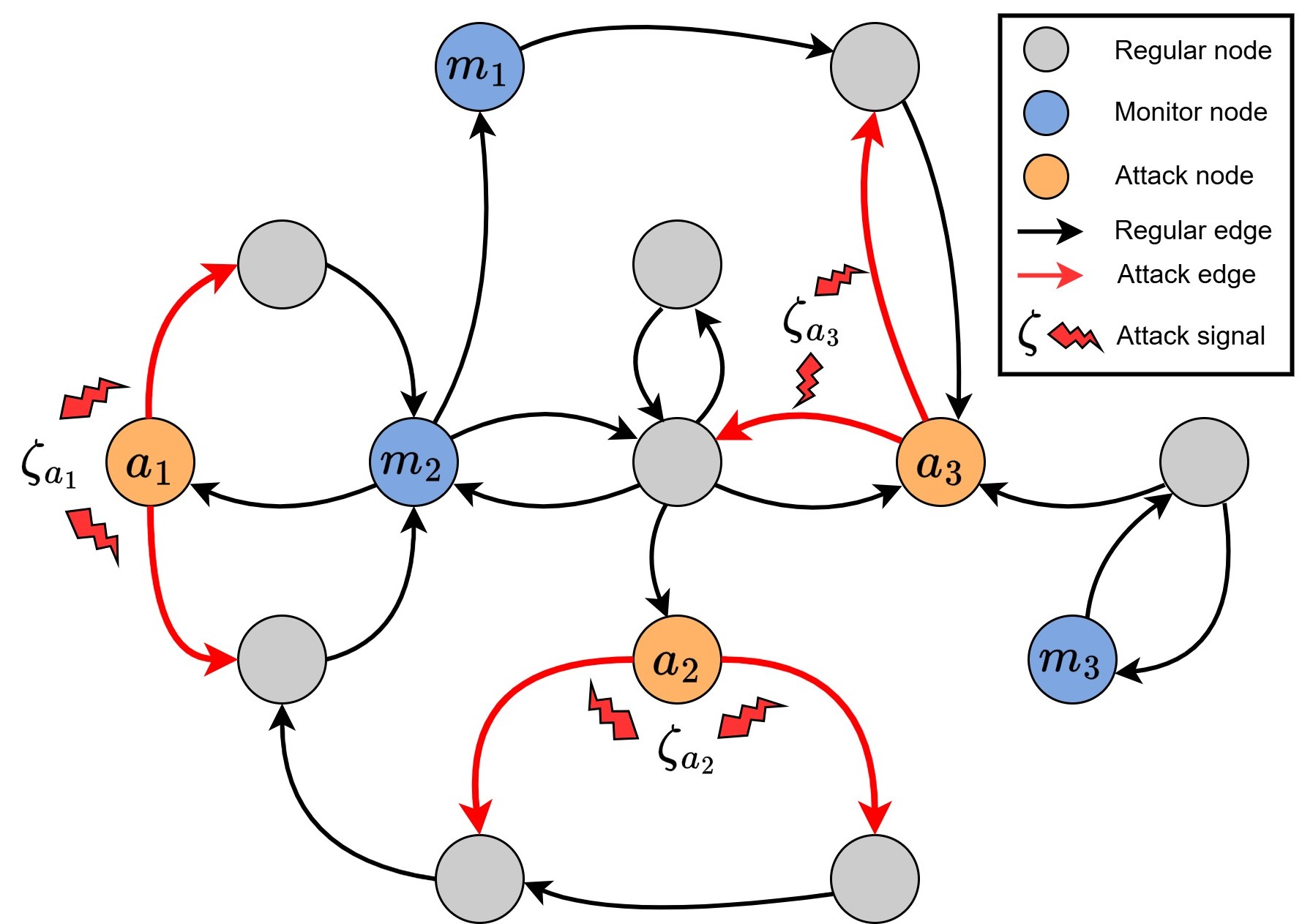}
    \caption{A networked control system under stealthy data injection attacks. An adversary injects attack signals into the information broadcast from orange nodes to their neighbors while a defender monitors the outputs of blue nodes.}
    \vspace{-2pt}
    \label{fig:problem}
\end{figure}
\section{Problem Formulation}
\label{sec:problem}
In this section, we first describe a positive networked control system with a network performance loss under stealthy attacks. 
Secondly, the adversary and the defender are modeled with opposing objectives.
Finally, we formulate the three main research problems that will be studied in this paper. 
\subsection{Positive networks under attacks}
We first introduce a positive networked control system in a normal operation without attacks and describe the resources of the adversary and the defender. Subsequently, the system modeling under attacks is formulated.  
\subsubsection{Positive networks without attacks}
Consider a strongly connected digraph $\Gc \triangleq (\Vc, \, \Ec, \, A, \, \Theta)$ with $N$ nodes and positive weights, the state-space model of a one-dimensional node $i$ is described:
\begin{align}
	\dot x_i(t) &= -\theta_i x_i(t) + \sum_{j \in \Nc_i} A_{ij} \big(x_j(t) - x_i(t)\big), \non \\
    &\hspace{3cm}
 ~ i \in \Vc = \bigl\{1,\,2,\ldots,\,N\bigr\},
	\label{sys:xi}
\end{align}
where $x_i(t) \in \Rbb_{\geq 0}$ is the state of node $i$.
The second term of \eqref{sys:xi} represents the information broadcast to node $i$ from its neighbors over a communication network, which will be discussed further when the network is under attack.
For convenience, let us denote $x(t) \triangleq \big[x_1(t),~x_2(t),\ldots,~x_N(t)\big]^\top$. Therefore, the model of the network \eqref{sys:xi} without attacks can be rewritten as follows:
\begin{align}
    \dot x(t) = -L x(t), \label{sys:x_healthy}
\end{align}
where $L$ is the in-degree Laplacian matrix representing the graph $\Gc$.

Akin to robust control, the network performance loss of the entire network for a given, possibly infinite, time horizon $[0,H]$ is formulated as follows:
\begin{align}
    J \triangleq \norm{p}_{\Lc_2[0,H]}^2.
    \label{sys:J}
\end{align}
Here, the performance loss $p(t) \in \Rbb^N$ is defined as follows:
\begin{align}
    p(t) = W x(t), \label{sys:pi}
\end{align}
where $W \triangleq \textbf{diag}([w_i])$ and $w_i \in \Rbb_{> 0}$ is a given weighting factor, discussed more in Section~\ref{sec:robust_security}.
\subsubsection{Defense resources}
To detect potential malicious activities, the defender is allowed to select at most $\beta$ nodes from the node set $\Vc$ to form a set of monitor nodes, denoted as
\begin{align}
    \Mc \triangleq \{m_1,m_2,\ldots,m_{| \Mc |} \} \subset \Vc, ~ |\Mc| \le \beta. \label{monitor_set}
\end{align}
More specifically, the defender monitors the following output measurements:
\begin{align}
    y_m(t) = e^\top_m x(t),~~ \forall \, m \in \Mc. \label{sys:ym}
\end{align}
At each monitor node $m \in \Mc$, a corresponding alarm threshold $\delta_m \in \Rbb_{>0}$ is assigned. The defender detects the presence of the adversary if the output energy for a given time horizon $[0,H]$ of at least one monitor node crosses its corresponding alarm threshold, i.e.,
\begin{align}
    \norm{y_m}^2_{\Lc_2[0,H]} > \delta_m^2. \label{alarm_deltam}
\end{align}
For later use, let us denote the vector of all the alarm thresholds as $\delta \triangleq [\delta_1,\,\delta_2,\ldots,\,\delta_N]^\top$. 
\subsubsection{Adversary resources}
In critical conditions, the adversary gains access to the network and acquires enough information about the system parameters and defense strategies. Then, the adversary conducts the following attack strategies.

The adversary selects exactly $\alpha~(\alpha \leq N)$ nodes on which to conduct false data injection attacks \textit{on the information broadcast from these $\alpha$ attack nodes to their neighbors} (the orange nodes in Fig.~\ref{fig:problem}). More specifically, these $\alpha$ nodes are not directly affected by attacks, but their neighboring nodes are. Henceforth, these $\alpha$ nodes are called attack nodes, which are defined in the following.

Let us denote a set of $\alpha$ attack nodes as follows:
\begin{align}
    \Ac \triangleq \{ a_1, a_2, \ldots, a_{\alpha} \} \subset \Vc. \label{attack_set}
\end{align}
For each attack node $a_i \in \Ac$, the adversary designs an additive non-negative attack signal $\zeta_{a_i}(t) \in \Rbb_{\geq 0}$ into the information broadcast from the attack node $a_i$ to all its out-neighbors, which is assumed to belong to $\Lc_{2}$. This assumption can be translated into the following bounded energy constraint:
\begin{align}
   \epsilon \norm{\zeta_{a_i}}^2_{\Lc_2[0,\infty]}  \leq 1, ~\forall \, a_i \in \Ac, \label{zeta_bounded}
\end{align}
where the parameter $\epsilon \in \Rbb_{\geq 0}$ is used by the defender to conduct an analysis of the attack impact and to design monitoring policy later in the paper. Note that, for considering more powerful adversaries, the defender can remove this energy assumption \eqref{zeta_bounded} by setting $\epsilon = 0$, which is further discussed in Section~\ref{sec:graph_security}.
Consequently, the state dynamics in \eqref{sys:xi} under false data injection attacks can be represented as follows:
\begin{align}
    \dot x^a_j(t) \triangleq \dot x_j(t) + \sum_{a_i \in  \Nc_{j} \cap \Ac} A_{j \, a_i} \zeta_{a_i}(t), \label{sys:uia}
\end{align}
where $A_{j \, a_i}$ is the $(j, a_i)$-element of the adjacency matrix $A$ and the superscript ``$a$'' stands for signals possibly subjected to attacks.

Given the attack set $\Ac$ in \eqref{attack_set}, let us denote the attack input matrix $E_\Ac = [e_{a_1},e_{a_2},\ldots,e_{a_{\alpha}}]$ and the attack signal vector as $\zeta(t) \triangleq [\zeta_{a_1}(t),\zeta_{a_2}(t),\ldots,\zeta_{a_{\alpha}}(t)]^\top \in \Rbb_{\geq 0}^{\alpha}$.


\subsubsection{Positive networks under attacks}
Given the above descriptions of the healthy network \eqref{sys:x_healthy}, the healthy network performance loss \eqref{sys:J}-\eqref{sys:pi}, the monitor outputs \eqref{sys:ym}, the adversary strategy \eqref{sys:uia}, and a set of attack nodes $\Ac$ in \eqref{attack_set} which corresponds to an attack input matrix $E_\Ac$, the positive network under false data injection attacks can be described as follows:
\begin{align}
    \dot x^a(t) &= - L x^a(t)  + A E_\Ac  \zeta(t), \label{sys:xa} \\
    p^a(t) &= W x^a(t), \label{sys:pa} \\
    y^a_m(t) &= e_m^\top x^a(t), ~ \forall m \in \Mc, \label{sys:yma}
\end{align}
As a consequence, the malicious impact of attacks on the network performance loss is formulated as follows:
\begin{align}
    J^a \triangleq \norm{p^a}_{\Lc_2[0,H]}^2.
    \label{sys:Ja}
\end{align}
It is worth noting that the network performance loss under attacks \eqref{sys:Ja} is computed based on the signals under attacks \eqref{sys:pa} and is different from the network performance loss \eqref{sys:J} without attacks. In the sequel, we mainly consider \eqref{sys:Ja}. 

Given that the graph $\Gc$ is strongly connected and self-loop control gains in \eqref{sys:xi} exist, $-L$ in \eqref{sys:xa} is Hurwitz. As a result, the system can be assumed to converge to its equilibrium before being exposed to attacks. 
Let us make use of the following assumptions.
\begin{Assumption}
    \label{assumption:x0}
    The system \eqref{sys:xa} is at its equilibrium $x_e = 0$ before being affected by the attack signal $\zeta(t)$, i.e., $x^a(0) = x_e = 0$.  \QET
\end{Assumption}
\begin{Assumption}
    \label{assumption:attack_node_independence}
    The matrix $A E_\Ac$ in \eqref{sys:xa} contains linearly independent columns. \QET
\end{Assumption}

Assumption~\ref{assumption:attack_node_independence} enables us to consider attack signals that have linearly independent effects on the network. A linearly dependent column in $A E_\Ac$ and its corresponding attack signal can be eliminated since its effect on the network can be replaced with an extra linear combination of the other attack signals without loss of generality. In the following, we describe the models of the adversary and the defender, where they have opposing objectives. 


\subsection{Adversary model}
The purpose of the adversary is to maximally degrade the network performance loss \eqref{sys:Ja} subjected to the network model \eqref{sys:xa}-\eqref{sys:yma} while remaining stealthy to the defender.
This adversarial purpose allows us to mainly focus on stealthy injection attacks, which will be defined in the following. 
\begin{Definition}
    [Stealthy injection attacks] \label{def:stealthy_attacks}
    Consider the structure of the continuous LTI system
    \eqref{sys:xa}-\eqref{sys:yma} with monitor outputs $y_m^a(t) = e_m^\top x^a(t)$ for every $m \in \Mc$, which is a set of monitor nodes. The attack $\zeta(t)$ on the system \eqref{sys:xa}-\eqref{sys:yma} is called a stealthy attack if the following condition $\norm{y_m^a}^2_{\Lc_2} \leq \delta_m^2$ holds for all $m \in \Mc$. \QET
\end{Definition}

In Fig.~\ref{fig:example_stealthy}, we show an example that distinguishes a stealthy attack from a detected attack. Due to the fact that a detected attack can be mitigated responsively, stealthy attacks are of interest to us afterward (see the discussion on the importance of stealthiness in \cite[Section II.E]{umsonst2021bayesian}).

Based on Definition~\ref{def:stealthy_attacks}, the worst-case disruption of stealthy injection attacks on the network performance loss \eqref{sys:Ja}, given a monitor set $\Mc$ and an attack set $\Ac$, is formulated as follows:
\begin{align}
    Q(\Mc,\Ac) \triangleq 
    ~\sup_{\zeta \in \Rbb_{\geq 0}^{\alpha}, \zeta \in \Lc_{2e}}&~
    \norm{p^a}_{\Lc_2}^2 
    \label{Q_sup} \\
    \text{s.t.}~~~~~&~
    \eqref{sys:xa}-\eqref{sys:yma}, \, x^a(0) = 0, \non \\
    &~ \norm{y_m^a}^2_{\Lc_2} \leq \delta_m^2,~ \forall \, m \in \Mc, \non \\
    &~~ \epsilon \norm{\zeta_{a_i}}^2_{\Lc_2[0,\infty]}  \leq 1, ~\forall \, a_i \in \Ac. \non
\end{align}

The worst-case disruption \eqref{Q_sup} is also called the Output-to-Output gain security metric proposed in \cite{teixeira2015strategic}.
The non-convex optimization problem \eqref{Q_sup} can be computed by leveraging its dual form and the dissipative systems theory \cite{moylan2014dissipative}, as the same procedure reported in \cite{teixeira2015strategic}. 
However, such a procedure does not scale well and is therefore unsuitable for very large networks (see the discussion on the time complexity \cite[Section IV]{nguyen2025scalable}). By leveraging the positivity of the network considered in this paper, we aim for an alternative way to quantify \eqref{Q_sup} with a lighter computing resource requirement, which is expected to suit large networks. 
Therefore, let us present the following problem that will be studied.
\begin{Problem}
    [Scalable assessment] \label{problem:scalable}
    Given a positive network under attack, depicted in Fig.~\ref{fig:problem}, representing the system \eqref{sys:xa}, a monitor set $\Mc$ in \eqref{monitor_set}, an attack set $\Ac$ in \eqref{attack_set}, and the worst-case disruption \eqref{Q_sup}, provide a scalable computation for \eqref{Q_sup}, which is largely independent of the network size.
    \QET
\end{Problem}

\begin{figure}[!t]
    \centering
    \includegraphics[width=0.95\linewidth]{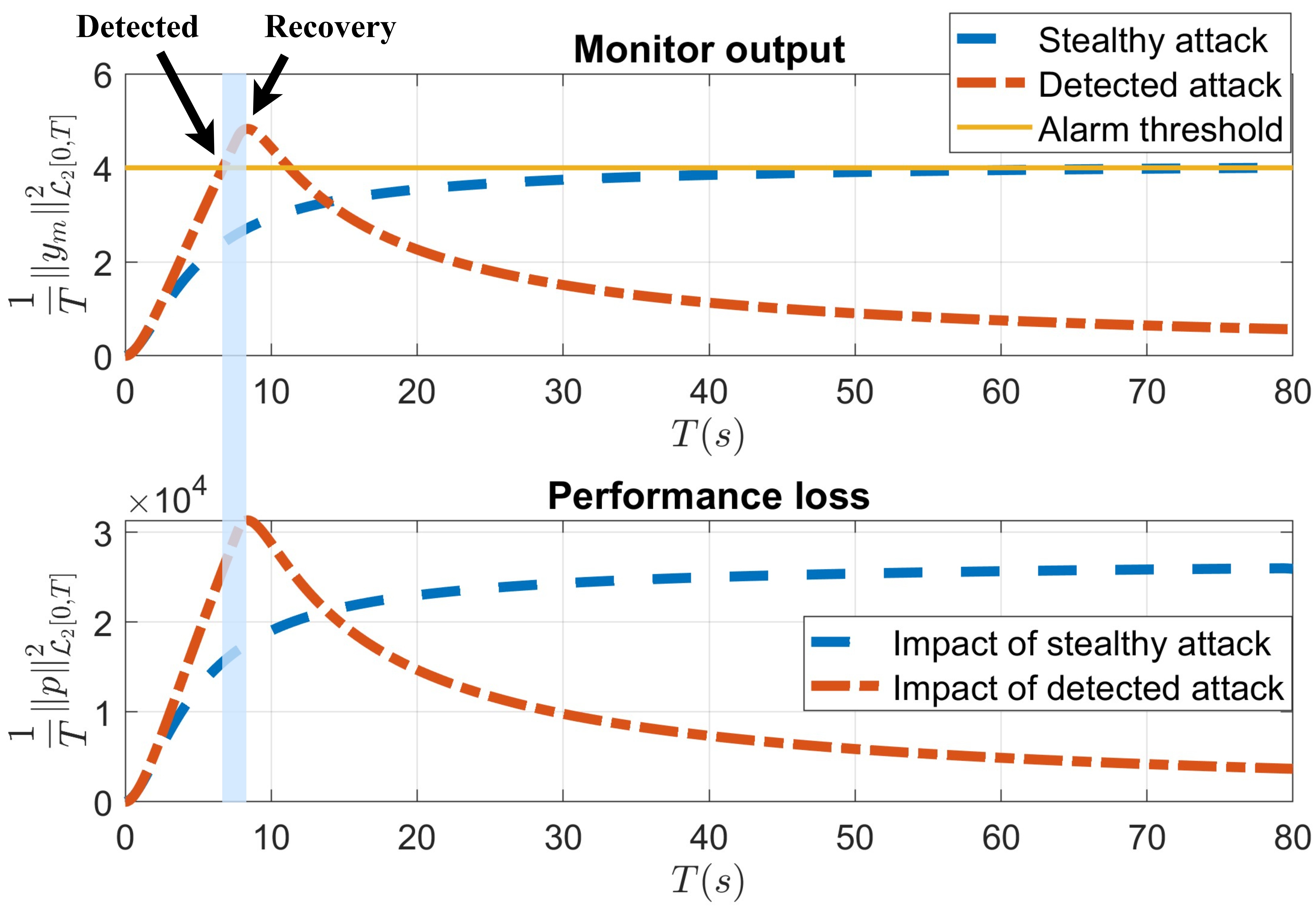}
    \caption{Illustration of stealthy (blue dashed) and detected (red dashed-dotted) attacks relative to the alarm threshold (yellow solid) in the monitor output (top) and their corresponding performance loss (bottom). For illustration purposes, we assume that the system has a recovery mechanism after an attack is detected, i.e., setting the attack signal to zero onward, illustrated by a light-blue zone. The detected attack might cause higher initial performance loss before being detected, but its impact diminishes after being detected. Meanwhile, the stealthy attack remains below the alarm threshold and causes a steady disruption in the system.}
    \label{fig:example_stealthy}
\end{figure}

\subsection{Defender model}
In practice, the defender should design strategic defense plans against possible malicious activities due to the fact that the defender is normally short of information when adversaries attack the system \cite{falliere2011w32, Havex2014, kshetri2017hacking, Israel2020, miller2021looking}. For a fixed attack set $\Ac$, the defender chooses a monitor set $\Mc$ such that it minimizes the worst-case disruption \eqref{Q_sup}, i.e., 
\begin{align}
    \min_{\Mc}& ~ Q(\Mc, \Ac) \label{def_min_Mc} \\
    \text{s.t.}& ~ |\Mc| \le \beta, \non
\end{align}
where $Q(\Mc, \Ac)$ is defined in \eqref{Q_sup}. Furthermore, to have a robust defense policy, the defender considers the following optimization problem:
\begin{align}
    \min_{\Mc, Q_0 > 0}& ~ Q_0 \label{def_min_Mc_robust} \\
    \text{s.t.}~~~& ~ |\Mc| \le \beta, ~Q(\Mc, \Ac_i) \leq Q_0~~\forall \Ac_i, |\Ac_i| = \alpha. \non 
\end{align}
Here, we assume that the defender can only choose at most $\beta$ monitor nodes. Otherwise, the solutions to \eqref{def_min_Mc}-\eqref{def_min_Mc_robust} are trivial by selecting all the nodes.
We aim to solve \eqref{def_min_Mc}-\eqref{def_min_Mc_robust} by addressing the following two problems via a graph-theoretic perspective.

\begin{Problem}
    [Graph-theoretic optimal selection]
    \label{problem:opt_sel}  
    Given a digraph $\Gc$ describing the system under attacks \eqref{sys:xa}-\eqref{sys:yma}, an attack set $\Ac$, and the worst-case disruption \eqref{Q_sup}, provide a graph-theoretic selection of a monitor set $\Mc$ which is the optimal solution to \eqref{def_min_Mc}.
    \QET
\end{Problem}
\begin{Problem}
    [Graph-theoretic robust optimal selection]
    \label{problem:opt_sel_robust}  
    Given a digraph $\Gc$ describing the system under attacks \eqref{sys:xa}-\eqref{sys:yma}, and the worst-case disruption \eqref{Q_sup}, provide a graph-theoretic selection of a monitor set $\Mc$ which is the optimal solution to \eqref{def_min_Mc_robust}.
    \QET
\end{Problem}

Problem~\ref{problem:opt_sel} concerns monitor allocation for a fixed attack scenario, whereas Problem~\ref{problem:opt_sel_robust} seeks a robust monitor configuration against all admissible attack scenarios. Clearly, Problem~\ref{problem:opt_sel} is a particular version of Problem~\ref{problem:opt_sel_robust} when $\Ac_i$ is the only one admissible attack set. Later in the paper, the simplicity of Problem~\ref{problem:opt_sel} enables us to study it with the aim of laying a foundation for addressing Problem~\ref{problem:opt_sel_robust}. While Problem~\ref{problem:opt_sel} admits a more explicit graph-theoretic characterization,
Problem~\ref{problem:opt_sel_robust} will be addressed through a heuristic search guided by the same structural insights.

\begin{Remark}
\label{rem:prob_clarify}
    The results developed for Problems~\ref{problem:scalable}-\ref{problem:opt_sel_robust} enable us not only to quantify the output-to-output gain security metric in a more efficient manner but also to show a graph-theoretic interpretation of good/bad monitor nodes and high/low likely attack nodes. This interpretation can assist us in dealing with allocating monitor nodes \cite{nguyen2025scalable, nguyen2024security,li2018false,yuan2019stackelberg,umsonst2021bayesian} and protecting unmonitored nodes \cite{anand2023risk,shukla2022robust} in large networks. It is worth noting that the optimization problems \eqref{def_min_Mc}-\eqref{def_min_Mc_robust} have been solved by using a mixed-integer SDP in \cite[Theorem 1]{nguyen2025scalable}. Due to the fact that a mixed-integer SDP is NP-hard in general and scales badly with network sizes, we aim at dealing with Problems~\ref{problem:opt_sel}-\ref{problem:opt_sel_robust} without directly solving such an optimization problem, better suited to large networks. \QET
\end{Remark}

To facilitate addressing Problems~\ref{problem:opt_sel}-\ref{problem:opt_sel_robust}, we introduce a graph-theoretic property, which will be regularly employed throughout the remainder of the paper, in the following subsection.

\subsection{Katz-like centrality measures}
Among popular network centrality measures \cite{freeman1977set} such as degree, betweenness, and closeness, they focus on graph topologies of the network rather than state dynamics evolution over the network. Consequently, these centrality measures are not directly applicable to the security context, which involves graph topologies, system dynamics, and security parameters, as reported in our previous work \cite{nguyen2024centrality} and existing work \cite{riehl2017centrality,forsberg2023power}.
Instead, the Katz centrality measure \cite{katz1953centrality} offers a better balance between the graph topology and the state dynamics evolution, making it a promising tool for security problems that involve both network structure and system dynamics. More specifically, this measure indicates how much the influence of one node has on the other nodes in the graph, which better suits the security problems we study.
Let us introduce two adapted versions of the Katz centrality measure in the following:
\begin{Definition}
    [Katz-like centrality measure] \label{def:Katz}
    Given a digraph $\Gc$ with an in-degree adjacency matrix $A$ and an in-degree matrix $D_{\rm in}$, the monitor Katz-like centrality matrix $K_\delta$ and the performance Katz-like centrality matrix $K_W$, respectively, are defined as follows:
    \begin{align}
        K_\delta &\triangleq \textbf{diag}(\delta)^{-1} \sum_{i = 1}^\infty ~ (D_{\rm in}^{-1} A)^{i}, \label{Katz_monitor} \\
        K_W &\triangleq W \sum_{i = 1}^\infty ~ (D_{\rm in}^{-1} A)^{i}, \label{Katz_performance}
    \end{align}
    where $\delta \in \Rbb_{>0}^N$ is the alarm thresholds of all the nodes and $W$ is the performance matrix of all the nodes. 
    \QET
\end{Definition}

The centrality matrices in \eqref{Katz_monitor}-\eqref{Katz_performance} are adapted by adding system and security parameters, i.e., $W$ and $\delta$, to the original Katz centrality matrix in \cite{katz1953centrality}. This adaptation plays an important role in the security problem considered since the original Katz centrality matrix was proposed for graphs in general, without considering dynamical systems and security problems. These two adapted versions of the Katz centrality measure \eqref{Katz_monitor}-\eqref{Katz_performance} will be employed in addressing the worst-case disruption \eqref{Q_sup} in the following section and to guide the monitor-allocation analysis in Section~\ref{sec:graph_security}.


\section{Katz Centrality-based Security Assessment}
\label{sec:robust_security}
This section studies Problem~\ref{problem:scalable}. In the first part, we show that the worst-case disruption \eqref{Q_sup} is upper-bounded by a tractable SDP problem with the help of the dissipative systems theory \cite{moylan2014dissipative}. The second part of the section provides a sufficient condition under which \eqref{Q_sup} equals its resulting upper bound and is exactly computed in a more efficient manner, a solution to Problem~\ref{problem:scalable}.

\subsection{The worst-case disruption upper bound}
Followed by our previous results in \cite[Lemma 3]{nguyen2025scalable}, the worst-case disruption \eqref{Q_sup} is upper bounded by a solution to an SDP problem in the following lemma. 
\begin{Lemma}
    [Worst-case disruption upper bound]
    \label{lem:Q_computation}
    Given the network under attack \eqref{sys:xa}-\eqref{sys:yma}, the monitor set $\Mc$, and the attack set $\Ac$, the worst-case disruption \eqref{Q_sup} is upper bounded by a solution to the following SDP problem:
    \begin{align}
        Q(\Mc, \Ac) \leq \Gamma(P \in \Sbb^N_{\succeq 0}), \label{Q_upper_bound}
    \end{align}
    where
    \begin{align}
    \hspace{-0.1cm}
        \Gamma(P \in \Sbb^N_{\succeq 0}) = &
    \min_{\gamma_m, \psi, P}~ \sum_{m \in \Mc} \gamma_m \delta_m^2  + \textbf{1}^\top \psi \label{Q_min_sdp} \\ 
    \text{s.t.}~~~& \gamma_m  \in  \Rbb_{\geq 0}, \psi \in \Rbb^\alpha_{\geq 0}, P \in \Sbb^N_{\succeq 0}, \non \\
    &
    \ba{cc}
    -L^\top P - P L + W^2 & P A E_\Ac \\
    E_\Ac^\top A^\top P & - \epsilon \, \text{diag}(\psi)
    \ea \non \\
    &
    - \sum_{m \in \Mc} \gamma_m \textbf{diag} \Bigg(\ba{c} 
    e_m  \\ 0 \ea \Bigg) \preceq 0, \non
    \end{align}
    where $e_m$ is a canonical basis vector representing the monitor node $m \in \Mc$ and $\textbf{1}$ is an all-ones vector.
\QET
\end{Lemma}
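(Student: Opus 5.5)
The plan is a standard dissipativity (S-procedure) argument: every feasible point of the SDP \eqref{Q_min_sdp} certifies an upper bound on the objective of \eqref{Q_sup}, and the tightest such bound is $\Gamma$. Fix any feasible triple $(\gamma_m, \psi, P)$ with $\gamma_m \ge 0$, $\psi \in \Rbb^\alpha_{\ge 0}$, $P \succeq 0$. Take the storage function $V(x) = x^\top P x$ along trajectories of \eqref{sys:xa} with $x^a(0)=0$ by Assumption~\ref{assumption:x0}. Then
\begin{align}
\dot V = x^{a\top}(-L^\top P - PL)x^a + 2 x^{a\top} P A E_\Ac \zeta .
\end{align}
Multiply the matrix inequality in \eqref{Q_min_sdp} on both sides by the stacked vector $[x^{a\top}(t),\ \zeta^\top(t)]^\top$. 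Using $\textbf{diag}([e_m;0])$, which extracts $(e_m^\top x^a)^2 = (y^a_m)^2$, this yields the pointwise dissipation inequality
\begin{align}
\dot V(t) + \norm{p^a(t)}_2^2 - \epsilon \sum_{i} \psi_i \zeta_{a_i}(t)^2 - \sum_{m\in\Mc}\gamma_m \big(y^a_m(t)\big)^2 \le 0 .
\end{align}

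Next, integrate this inequality over $[0,H]$. Since $V(x^a(0))=0$ and $V(x^a(H)) \ge 0$ because $P\succeq 0$, we get
\begin{align}
\norm{p^a}^2_{\Lc_2[0,H]} \le \sum_{m\in\Mc}\gamma_m \norm{y^a_m}^2_{\Lc_2[0,H]} + \sum_i \psi_i\,\epsilon\norm{\zeta_{a_i}}^2_{\Lc_2[0,H]} .
\end{align}
For any attack admissible in \eqref{Q_sup}, stealthiness gives $\norm{y^a_m}^2_{\Lc_2}\le\delta_m^2$. The energy bound gives $\epsilon\norm{\zeta_{a_i}}^2_{\Lc_2[0,H]}\le \epsilon\norm{\zeta_{a_i}}^2_{\Lc_2[0,\infty]}\le 1$. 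Because $\gamma_m,\psi_i\ge 0$, these bounds can be substituted termwise, so $\norm{p^a}^2_{\Lc_2}\le \sum_m \gamma_m\delta_m^2 + \textbf{1}^\top\psi$.

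Finally, the right-hand side does not depend on $\zeta$. Taking the supremum over admissible $\zeta$ therefore gives $Q(\Mc,\Ac)\le \sum_m\gamma_m\delta_m^2+\textbf{1}^\top\psi$ for every feasible point, and taking the infimum over feasible points gives \eqref{Q_upper_bound}.

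The main obstacle is not the algebra but making the chain of inequalities rigorous. Two points need care. First, the time horizon: $\zeta\in\Lc_{2e}$, yet the energy constraint is stated on $[0,\infty]$, so I would argue on a finite $H$ and note that the bound is uniform in $H$, which also covers $H=\infty$. Second, the SDP may be infeasible, or its minimum may not be attained. In the infeasible case I would interpret $\Gamma=+\infty$, so the bound holds trivially. In the non-attained case I would read ``min'' as an infimum. The non-negativity constraint $\zeta\ge 0$ is not needed for this direction; it only enlarges slack, since the bound holds for all real $\zeta$. The underlying argument is the one of \cite[Lemma 3]{nguyen2025scalable}, which I would cite for the multiplier structure.
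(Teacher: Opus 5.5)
Your proposal is correct and follows essentially the same route as the paper. The paper first drops the non-negativity constraint on $\zeta$ (bounding $Q(\Mc,\Ac)$ by the relaxed $\tilde Q(\Mc,\Ac)$) and then invokes dissipativity with storage function $V=(x^a)^\top P x^a$ and supply rate $\sum_{m\in\Mc}\gamma_m\norm{y^a_m}_2^2+\sum_i\psi_i\epsilon\norm{\zeta_{a_i}}_2^2-\norm{p^a}_2^2$; this is exactly the pointwise dissipation inequality you derive and integrate explicitly, so your version just writes out the steps the paper delegates to its earlier work.
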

\begin{proof}
    Given that $Q(\Mc, \Ac)$ mainly considers positive signals,
    we denote $\tilde Q(\Mc, \Ac)$ as the worst-case disruption when considering all possible attack signals. Consequently, since $\tilde Q(\Mc, \Ac)$ has a larger feasibility set than $Q(\Mc, \Ac)$ with maximizing the same objective function, one has
    \begin{align}
        Q(\Mc, \Ac) \leq \tilde Q(\Mc, \Ac). 
    \end{align}
    By employing the dissipative systems theory \cite{moylan2014dissipative}, $\tilde Q(\Mc, \Ac)$ is upper bounded by the solution to the SDP problem $\Gamma(P \in \Sbb^N_{\succeq 0})$ in \eqref{Q_min_sdp} as reported in our previous work \cite{nguyen2024security} with a storage function $V = (x^a(t))^\top P x^a(t)$ and a supply rate $s(t) = \sum_{m \in \Mc} \gamma_m \norm{y_m^a(t)}_2^2 + \sum_{i = 1}^\alpha e_i^\top \psi \, \epsilon \norm{e_i^\top \zeta(t)}_2^2 - \norm{p^a(t)}_2^2$, where $P \in \Sbb^{N}_{\succeq 0}$. This completes the proof.
\end{proof}

In the SDP problem \eqref{Q_min_sdp}, an $(N \times N)$-symmetric variable $P$ is used where $N$ is the network size. This matrix $P$ contains $(N^2+N)/{2}$ decision variables. Clearly, the number of variables in \eqref{Q_min_sdp} significantly increases as the network size increases, making it impractical for large networks.
Regarding computational complexity, solving \eqref{Q_min_sdp} with generic
interior-point methods typically scales poorly with the network size, with a
worst-case complexity on the order of $O(N^6)$ (see \cite[Chapter~11.8.3]{boyd2004convex}).
Moreover, the SDP \eqref{Q_min_sdp} only serves as an upper bound for \eqref{Q_sup}, which might not be directly applicable for security assessment and security deployment.
This computational drawback motivates the study of Problem~\ref{problem:scalable} in the next subsection.
In contrast to the computational complexity of the SDP problem \eqref{Q_min_sdp},
the following subsection shows a sufficient condition under which the worst-case disruption \eqref{Q_sup} can be computed in a more efficient way.

\subsection{Exact computation for the worst-case disruption}

Let us make use of the following robustness measure:
\begin{align}
    Q_{\infty}(\Ac) \triangleq Q(\emptyset,\Ac), \label{Q_sup_im_Hinf}
\end{align}
which stands for the worst-case disruption against no monitor nodes. From a robust control perspective, the notation $Q_{\infty}(\Ac)$ measures the worst-case impact of attacks on the system.
It is worth noting that \eqref{Q_sup_im_Hinf} differs from the classical $\Hc_\infty$ norm since it considers the element-wise bounded energy for input signals. This metric is also called the diagonal norm-bound in \cite[Chapter 6]{boyd1994linear}.
Recall the worst-case disruption \eqref{Q_sup},
the following lemma shows its upper bound in terms of the robustness measure $Q_{\infty}(\Ac)$ in \eqref{Q_sup_im_Hinf}.

\begin{Lemma}
    [Sufficient robustness] 
    \label{lem:WboundQ} 
    Consider \eqref{Q_sup} and \eqref{Q_sup_im_Hinf}. Suppose that the following condition holds
    \begin{align}
        Q_\infty(\Ac) \leq \min_{i \in \Vc} w_{i}^2 ~\min_{m \in \Mc} \delta_m^2,
        \label{cond:W2_Hinf}
    \end{align} 
where $Q_{\infty}(\Ac)$ is defined in \eqref{Q_sup_im_Hinf}, $w_{i}$ is the $i$-th diagonal element of the performance weighting matrix $W$, and $\delta_m$ is the alarm threshold at monitor node $m$ of the monitor set $\Mc$. Then, the following inequality holds true
\begin{align}
    Q(\Mc,\Ac) I \preceq \min_{m \in \Mc} \delta_m^2 W^2. \label{cond:in_KYP}
\end{align} 
\end{Lemma}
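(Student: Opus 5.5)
The plan is a monotonicity argument on feasible sets, followed by a componentwise reading of the matrix inequality \eqref{cond:in_KYP}. Both sides of \eqref{cond:in_KYP} are diagonal matrices: $Q(\Mc,\Ac) I$ is a scalar multiple of the identity, and $\min_{m \in \Mc} \delta_m^2 W^2 = \textbf{diag}([\min_{m\in\Mc}\delta_m^2\, w_i^2])$. So \eqref{cond:in_KYP} is equivalent to the $N$ scalar inequalities
\begin{align}
Q(\Mc,\Ac) \leq w_i^2 \min_{m \in \Mc} \delta_m^2, \quad \forall\, i \in \Vc .
\end{align}
Since $w_i^2 \geq \min_{j\in\Vc} w_j^2$ for every $i$, it suffices to prove the single inequality
\begin{align}
Q(\Mc,\Ac) \leq \min_{j \in \Vc} w_j^2 \min_{m\in\Mc}\delta_m^2 .
\end{align}

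First I would show that $Q(\Mc,\Ac) \leq Q_\infty(\Ac)$. Compare the optimization problem \eqref{Q_sup} for the monitor set $\Mc$ with the same problem for $\emptyset$. Both share the objective $\norm{p^a}_{\Lc_2}^2$, the dynamics \eqref{sys:xa}--\eqref{sys:yma}, the zero initial condition from Assumption~\ref{assumption:x0}, the nonnegativity of $\zeta$, and the energy bounds \eqref{zeta_bounded}. The problem with $\Mc$ adds only the stealthiness constraints $\norm{y_m^a}_{\Lc_2}^2 \leq \delta_m^2$ for $m\in\Mc$. Its feasible set is therefore a subset of the feasible set for $\emptyset$, so its supremum is no larger, and by definition \eqref{Q_sup_im_Hinf} this gives the claim. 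The same argument holds if the supremum for $\emptyset$ is $+\infty$, although hypothesis \eqref{cond:W2_Hinf} then forces $Q_\infty(\Ac)$ to be finite.

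Second, I would combine this with \eqref{cond:W2_Hinf} to obtain $Q(\Mc,\Ac) \leq Q_\infty(\Ac) \leq \min_j w_j^2 \min_{m} \delta_m^2$. As noted above, this bound is at most $w_i^2 \min_m\delta_m^2$ for each $i$. Subtracting, the matrix $\min_m \delta_m^2 W^2 - Q(\Mc,\Ac) I$ is diagonal with nonnegative diagonal entries, hence positive semidefinite, which is \eqref{cond:in_KYP}.

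I do not expect a real obstacle here. The only delicate point is bookkeeping: $\Mc$ must be nonempty for $\min_{m\in\Mc}\delta_m^2$ to make sense, which I will state as implicit in the lemma. I will also remark that the monotonicity step uses only the structure of the feasible sets, not positivity or Lemma~\ref{lem:Q_computation}.
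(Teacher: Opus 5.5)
Your proposal is correct and follows essentially the same route as the paper: the feasible-set inclusion gives $Q(\Mc,\Ac)\le Q(\emptyset,\Ac)=Q_\infty(\Ac)$, and combining this with \eqref{cond:W2_Hinf} and the diagonal structure of $W$ yields \eqref{cond:in_KYP}. The only difference is bookkeeping. You reduce the matrix inequality to $N$ scalar inequalities explicitly, whereas the paper first restates the hypothesis as $Q_\infty(\Ac) I \preceq \min_{m\in\Mc}\delta_m^2 W^2$ and then substitutes the bound.
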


\begin{proof}
    By construction, the optimal solution to \eqref{Q_sup_im_Hinf} is not smaller than that to \eqref{Q_sup}, i.e., $Q_{\infty}(\Ac) = Q(\emptyset, \Ac) \geq Q(\Mc,\Ac)$ for all monitor set $\Mc$. On the other hand, if the condition \eqref{cond:W2_Hinf} holds, one has $Q_\infty(\Ac) I \preceq \min_{m \in \Mc} \delta_m^2 W^2$. As a result, one obtains $Q(\Mc,\Ac) I \preceq \min_{m \in \Mc} \delta_m^2 W^2$ for any monitor set $\Mc$, concluding the proof.
\end{proof}



Lemma~\ref{lem:WboundQ} shows that a sufficient robustness of the network in the absence of monitor nodes essentially yields an upper bound for the worst-case disruption in the presence of monitor nodes in terms of system parameters.
This upper bound further enables us to employ a result on the celebrated Kalman-Yakubovich-Popov Lemma for positive systems in \cite[Theorem 1]{rantzer2015kalman} with the two purposes, which are 1) addressing Problem~\ref{problem:scalable}; and 2) establishing the connection between the worst-case impact \eqref{Q_sup} and the Katz-like centrality measures \eqref{Katz_monitor}-\eqref{Katz_performance} in the following theorem.

\begin{Theorem} [Katz centrality-based security assessment] \label{thm:robust_security}
    Consider the relationship \eqref{Q_upper_bound} and suppose 
    the condition \eqref{cond:W2_Hinf} holds. Then, the inequality in \eqref{Q_upper_bound} becomes equality, and the SDP problem on the right-hand side can be exactly solved with a non-negative diagonal $P$, i.e.,
    \begin{align}
        Q(\Mc, \Ac) = \Gamma(P \in \Sbb^N_{\succeq 0}) =  \Gamma(P \succeq 0, P ~\text{is diagonal}). \label{Q_sdp_P_diagonal}
    \end{align}
    Further, $\Gamma(P \succeq 0, P ~\text{is diagonal})$ is even equivalent to the following SDP problem   
    \begin{align}
       \Gamma(P \succeq 0, P ~\text{is diagonal}) = \Gamma(\text{non-}P), \label{Q_sdp_Ainverse}
     \end{align}
     where
     \begin{subequations} \label{Q_min_sdp_short}
     \begin{align} 
    &  \Gamma(\text{non-}P) =  
    \min_{\gamma_m \in \Rbb_{\geq 0}, \psi \in \Rbb_{\geq 0}^\alpha}  \sum_{m \in \Mc} \gamma_m \delta_m^2 +  \textbf{1}^\top \psi \label{Q_min_sdp_short_obj} \\ 
     &\text{s.t.}~~ 
     ( K_W E_\Ac)^\top K_W E_\Ac - \epsilon \, \text{diag}(\psi) \preceq 
     \non \\
     & \hspace{1.8cm} 
     \sum_{m \in \Mc} \gamma_m \delta_m^2 (E_\Ac^\top K_\delta^\top e_m) (E_\Ac^\top K_\delta^\top e_m)^\top
     \label{Q_min_sdp_short_cons}
    \end{align}
    \end{subequations}
    with 
    $K_\delta$ defined in \eqref{Katz_monitor}, and $K_W$ defined in \eqref{Katz_performance}. 
    \QET
\end{Theorem}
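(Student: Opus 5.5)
The plan is to reduce the SDP \eqref{Q_min_sdp} to a static, zero-frequency condition using the KYP lemma for positive systems \cite[Theorem 1]{rantzer2015kalman}, and then identify the DC gain of the attack channels with the Katz-like matrices of Definition~\ref{def:Katz}.

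\textbf{Step 1: hypotheses of the positive KYP lemma.} The system matrices have the required structure: $-L$ is Metzler and Hurwitz, and $AE_\Ac \geq 0$ entrywise. The LMI in \eqref{Q_min_sdp} is a dissipation inequality with supply matrix
\[
M=\begin{bmatrix} W^2-\sum_{m\in\Mc}\gamma_m e_m e_m^\top & 0\\ 0 & -\epsilon\,\textbf{diag}(\psi)\end{bmatrix}.
\]
Rantzer's theorem additionally requires the state block $M_{xx}$ to be entrywise non-negative. Its off-diagonal entries are zero, so this reduces to $\gamma_m \le w_m^2$ for every $m\in\Mc$.

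To secure this, I would argue that any optimal point of \eqref{Q_min_sdp} satisfies $\gamma_m\delta_m^2 \le \Gamma(P\in\Sbb^N_{\succeq 0})$, because every term of the objective is non-negative. It then suffices to show $\Gamma \le Q_\infty(\Ac)$. Condition \eqref{cond:W2_Hinf}, through Lemma~\ref{lem:WboundQ}, then gives $\gamma_m \le Q_\infty(\Ac)/\delta_m^2 \le \min_i w_i^2 \le w_m^2$.

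For the bound $\Gamma \le Q_\infty(\Ac)$, first take $\gamma=0$. Then $M_{xx}=W^2\ge 0$, so the positive KYP lemma applies directly to the unmonitored problem, and its value equals $Q_\infty(\Ac)$ once Step 3 below establishes exactness. This yields a feasible point with value $Q_\infty(\Ac)$. Consequently, restricting to $\gamma_m\le w_m^2$ loses nothing, and on that restricted set Rantzer's theorem makes three statements equivalent: existence of a general $P\succeq 0$, existence of a diagonal $P\succeq 0$, and the frequency inequality evaluated only at $\omega=0$. This gives the second equality in \eqref{Q_sdp_P_diagonal}.

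\textbf{Step 2: the zero-frequency inequality.} At $\omega=0$ the frequency inequality reads
\[
G(0)^\top M_{xx}\, G(0)-\epsilon\,\textbf{diag}(\psi)\preceq 0,
\qquad G(0)=L^{-1}AE_\Ac .
\]
Using $L=D_{\rm in}(I-D_{\rm in}^{-1}A)$ and the fact that $\rho(D_{\rm in}^{-1}A)<1$ (strong connectivity with positive self-loops), a Neumann series gives
\[
L^{-1}A=(I-D_{\rm in}^{-1}A)^{-1}D_{\rm in}^{-1}A=\sum_{i\ge1}(D_{\rm in}^{-1}A)^i .
\]
Hence $WG(0)=K_WE_\Ac$ and $e_m^\top G(0)=\delta_m\, e_m^\top K_\delta E_\Ac$. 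Expanding $M_{xx}$ then turns the inequality into exactly \eqref{Q_min_sdp_short_cons}. The objective does not depend on $P$, so eliminating $P$ gives $\Gamma(P \text{ diagonal})=\Gamma(\text{non-}P)$, which is \eqref{Q_sdp_Ainverse}.

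\textbf{Step 3: exactness $Q(\Mc,\Ac)=\Gamma$ (main obstacle).} Lemma~\ref{lem:Q_computation} already gives $Q\le\Gamma$, so only the reverse inequality is needed. For this I would use slowly varying, essentially constant inputs $\zeta(t)=\bar\zeta\ge 0$ on a long horizon, scaled so that the energy constraints hold. As the horizon grows, the ratios of $\norm{p^a}^2$ and $\norm{y^a_m}^2$ to the input energy converge to the static quadratic forms in $G(0)\bar\zeta$. This reduces the claim to showing that the static non-convex QCQP
\[
\sup_{\bar\zeta\ge0}\ \norm{K_WE_\Ac\bar\zeta}^2
\]
subject to the monitor and energy constraints has the same value as its Lagrangian dual, which is \eqref{Q_min_sdp_short}.

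I would establish this using the sign structure of the data. All quadratic forms involved have entrywise non-negative matrices ($K_W$, $K_\delta$ and $E_\Ac$ are all non-negative), and the constraint forms are either rank-one non-negative or diagonal. QCQPs with this structure are known to have exact SDP relaxations with non-negative optimal rank-one solutions. Alternatively, I would take an optimal dual point and extract a non-negative Perron-type vector $\bar\zeta$ from the kernel of the slack matrix, then verify complementary slackness directly. Making this lossless-relaxation argument rigorous is the step I expect to be hardest.
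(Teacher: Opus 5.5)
\textbf{Overall.} Your Steps 1--2 follow the paper's part b): the $\omega=0$ reduction via Proposition~\ref{thm_anders} and the Neumann-series identities \eqref{Katz_monitor_mod}--\eqref{Katz_performance_mod}. Your way of securing $\gamma_m\le w_m^2$ through $\Gamma\le Q_\infty(\Ac)$ justifies the hypothesis $C_2^\top C_2\ge C_1^\top C_1$ more explicitly than the paper's appeal to \eqref{cond:in_KYP}.

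\textbf{The gap is Step 3.} This is the only place where $Q(\Mc,\Ac)\ge\Gamma$ would be obtained, you leave it open, and the tool you propose does not apply. The known exactness results for QCQPs with sign structure (Kim--Kojima, Sojoudi--Lavaei) need sign-definiteness. In maximization form, that means nonnegative off-diagonal entries in the objective and nonpositive ones in the constraint matrices. Your monitor constraints $\delta_m^2 (k_m^\top z)^2\le\delta_m^2$, with $k_m:=E_\Ac^\top K_\delta^\top e_m\ge 0$, have the wrong sign.

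The failure is real, not technical. Take $\alpha=2$, a single monitor with $k_m=(1,1)^\top$, $\delta_m=1$, and $S_W:=(K_WE_\Ac)^\top K_WE_\Ac$.
\begin{itemize}
\item The direction $z=(t,-t)$ is invisible to the monitor, so the Lagrangian dual \eqref{Q_min_sdp_short} is at least $(S_{11}+S_{22}-2S_{12})/\epsilon$.
\item Over $z\ge 0$ you are maximizing a convex quadratic on the polytope $\{z\ge 0,\ z_1+z_2\le 1,\ \epsilon z_i^2\le 1\}$. For small $\epsilon$ its value is $\max(S_{11},S_{22})$, independent of $\epsilon$.
\end{itemize}
So neither the general relaxation claim nor a Perron-vector extraction from the slack matrix can work on its own. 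Note also that your Step 3 never uses \eqref{cond:W2_Hinf}, and without that hypothesis exactness for nonnegative signals can fail.

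\textbf{The missing idea.} Condition \eqref{cond:W2_Hinf}, which also forces $\epsilon>0$, makes every monitor constraint inactive. For any nonnegative $\zeta$ satisfying only the energy bounds, $\|p^a\|^2_{\Lc_2}\le Q_\infty(\Ac)$ by definition, hence
\[
\|y^a_m\|^2_{\Lc_2}=\|x^a_m\|^2_{\Lc_2}\le \frac{\|p^a\|^2_{\Lc_2}}{w_m^2}\le \frac{Q_\infty(\Ac)}{w_m^2}\le \delta_m^2 ,
\]
so $Q(\Mc,\Ac)=Q_\infty(\Ac)$. The rest can be squeezed with tools you already have:
\begin{itemize}
\item For $\Mc=\emptyset$ your DC argument does work. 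There are only diagonal constraints, $S_W\ge 0$ entrywise, and $z=\mathbf{1}/\sqrt{\epsilon}$ is optimal for the SDP relaxation. This gives $Q_\infty(\Ac)=\mathbf{1}^\top S_W\mathbf{1}/\epsilon$.
\item The same case gives a diagonal $P\succeq 0$ at $\gamma=0$, so $\Gamma(P\succeq 0,\ P~\text{is diagonal})\le Q_\infty(\Ac)$.
\item The same inequality at DC shows $z=\mathbf{1}/\sqrt{\epsilon}$ satisfies $(k_m^\top z)^2\le 1$. Weak duality then gives $\Gamma(\text{non-}P)\ge Q_\infty(\Ac)$.
\item $\Gamma(\text{non-}P)\le\Gamma(P\in\Sbb^N_{\succeq 0})$, because a congruence of the LMI with $[\,(L^{-1}AE_\Ac)^\top\ \ I\,]^\top$ eliminates $P$ for any symmetric $P$.
\end{itemize}
Together with Lemma~\ref{lem:Q_computation}, these chain to $Q_\infty(\Ac)\le\Gamma(\text{non-}P)\le\Gamma(P\in\Sbb^N_{\succeq 0})\le\Gamma(P\succeq 0,\ P~\text{is diagonal})\le Q_\infty(\Ac)$, and $Q(\Mc,\Ac)=Q_\infty(\Ac)$ from above, so all quantities equal $Q_\infty(\Ac)$.

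For comparison, the paper closes this step differently. It routes through the terminal-constrained supremum $\hat Q$ and equates it with $\Gamma(P\in\Sbb^N)$ via cyclo-dissipativity and the lossless S-procedure, rather than through a static QCQP.
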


\begin{proof}
    See Appendix~\ref{app:thm:robust_security_pf}.
\end{proof}

By observing \eqref{Q_min_sdp_short} in Theorem~\ref{thm:robust_security}, the robustness of the network represented in \eqref{cond:W2_Hinf} assists the computation of \eqref{Q_min_sdp} in dropping the matrix variable $P$, which contains $(N^2+N)/2$ decision variables. This drop significantly reduces the computational complexity. In \eqref{Q_min_sdp_short}, the decision variables consist of one non-negative scalar $\gamma_m$ per monitor node together with the $\alpha$-dimensional vector $\psi$, which is significantly smaller than the variable set in \eqref{Q_min_sdp}. 
Furthermore, the dimension of the linear matrix inequality constraint in \eqref{Q_min_sdp_short} is the same as the number of attack nodes and is independent of the size of the network. As a result, the computation \eqref{Q_min_sdp_short} does scale well with the network size, suitable for large networks.

For a further graph-theoretic observation, the left-hand side (LHS) of \eqref{Q_min_sdp_short_cons} represents the impact of attack nodes on the network performance through the performance Katz-like centrality measure $K_W$, while the right-hand side (RHS) of \eqref{Q_min_sdp_short_cons} represents the impact of attack nodes on monitor nodes through the monitor Katz-like centrality measure $K_\delta$. By leveraging this graph-theoretic interpretation, the next section studies  Problem~\ref{problem:opt_sel} and develops a heuristic search for sub-optimal monitor nodes as a proxy approach to Problem~\ref{problem:opt_sel_robust}.

\begin{Remark}
		It is worth noting that \eqref{Q_sdp_P_diagonal} and \eqref{Q_sdp_Ainverse} hold if the sufficient condition \eqref{cond:W2_Hinf} holds true. When this condition does not hold, $\Gamma(P \succeq 0, P ~\text{is diagonal})$ serves as an upper-bound for $\Gamma(P \in \Sbb^N_{\succeq 0})$ (see more details in the proof of Theorem~\ref{thm:robust_security}). Meanwhile, $\Gamma(\text{non-}P)$ serves as a lower bound for $\Gamma(P \in \Sbb^N_{\succeq 0})$ since $\Gamma(\text{non-}P)$ is equivalent to only considering attack signals at zero frequency (see Proposition~\ref{thm_anders} for more details). Therefore, one always obtains the following relationship:
		\begin{align}
				\Gamma(\text{non-}P) \leq \Gamma(P \in \Sbb^N_{\succeq 0}) \leq	\Gamma(P \succeq 0, P ~\text{is diagonal}). \label{Q_min_sdp_upper_lower_bound}
	    \end{align}
	    Since $\Gamma(\text{non-}P)$ and $\Gamma(P \succeq 0, P ~\text{is diagonal})$
	    require significantly less computational resources than $\Gamma(P \in \Sbb^N_{\succeq 0})$, they are still useful in analyzing security for large networks. \QET
	\end{Remark}

\section{Katz Centrality-based Security Allocation}
\label{sec:graph_security}
In this section, we address Problems~\ref{problem:opt_sel}--\ref{problem:opt_sel_robust} by
leveraging the result presented in Theorem~\ref{thm:robust_security}.
Within Problem~\ref{problem:opt_sel}, we are interested in analyzing the existence of a solution to \eqref{Q_min_sdp_short} through the Katz-like centrality measures. This analysis supports us in providing a heuristic search for a sub-optimal monitor set, which addresses Problem~\ref{problem:opt_sel_robust}.

\subsection{Katz centrality-based geometric interpretation}

We investigate \eqref{Q_min_sdp_short} with $\epsilon$ being set to $0$ in order to study Problems~\ref{problem:opt_sel}-\ref{problem:opt_sel_robust}. This assumption enables us to relax the attack energy constraint \eqref{zeta_bounded} in \eqref{Q_sup}, with the last inequality constraint removed, to have a richer optimization problem, whose boundedness is investigated in this subsection. The LHS of \eqref{Q_min_sdp_short_cons} is considered in the following lemma.

\begin{Lemma}
    \label{lem:indepedent_influence}
    Suppose $\epsilon = 0$ and Assumption~\ref{assumption:attack_node_independence} holds true. The matrix computed in the LHS of \eqref{Q_min_sdp_short_cons} is symmetric and positive definite.
  \QET
\end{Lemma}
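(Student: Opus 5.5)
The LHS of \eqref{Q_min_sdp_short_cons} with $\epsilon = 0$ is the Gram matrix $G \triangleq (K_W E_\Ac)^\top K_W E_\Ac$. Symmetry is immediate from this form. It is also automatically positive semidefinite, since $\zeta^\top G \zeta = \norm{K_W E_\Ac \zeta}_2^2 \ge 0$ for every $\zeta \in \Rbb^\alpha$. Positive definiteness is therefore equivalent to $K_W E_\Ac$ having full column rank $\alpha$, and the plan is to derive this from Assumption~\ref{assumption:attack_node_independence}.

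First I will write $K_W$ in closed form. Because $-L$ is Hurwitz and $L = D_{\rm in} - A$ with $D_{\rm in} \succ 0$ diagonal (every $\theta_i > 0$) and $A \ge 0$ entrywise, $L$ is a nonsingular M-matrix. Hence $\rho(D_{\rm in}^{-1}A) < 1$, so the Neumann series in \eqref{Katz_performance} converges and
\begin{align}
\sum_{i=1}^\infty (D_{\rm in}^{-1}A)^i = (I - D_{\rm in}^{-1}A)^{-1} D_{\rm in}^{-1} A = (D_{\rm in}-A)^{-1} A = L^{-1}A .
\end{align}
This gives $K_W = W L^{-1} A$, and therefore $K_W E_\Ac = W L^{-1} (A E_\Ac)$. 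This factorization is the step that needs care: one must check that the spectral radius condition follows from the standing assumptions (strong connectivity with self-loop gains, so $-L$ Hurwitz), and that the series starts at $i=1$, which is what produces the factor $A$ on the right.

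The conclusion then follows. $W$ is diagonal with positive entries, so it is invertible, and $L^{-1}$ exists. Thus $W L^{-1}$ is nonsingular, and left multiplication by it preserves column rank. By Assumption~\ref{assumption:attack_node_independence}, $A E_\Ac$ has linearly independent columns, so $K_W E_\Ac$ does too. Consequently, for any $\zeta \neq 0$ we have $K_W E_\Ac \zeta \neq 0$, which gives $\zeta^\top G \zeta > 0$ and hence $G \succ 0$.
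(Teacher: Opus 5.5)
Your proof is correct and follows the paper's argument. Both rewrite $K_W = W L^{-1} A$ through the Neumann series, so that $K_W E_\Ac = (W L^{-1})(A E_\Ac)$ inherits full column rank from Assumption~\ref{assumption:attack_node_independence}, which makes the Gram matrix positive definite. Your explicit check that $\rho(D_{\rm in}^{-1}A) < 1$, via the nonsingular M-matrix property of $L$, supplies the convergence step that the paper's telescoping derivation in \eqref{Katz_performance_mod} leaves implicit.
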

\begin{proof}
    Since the matrices $W$ and $L$ are non-singular, the alternative computation of $K_W$ in \eqref{Katz_performance_mod} implies that the matrix $(K_W E_\Ac)^\top K_W E_\Ac$ is symmetric positive definite when $A E_\Ac$ is of full column rank, which is obtained by Assumption~\ref{assumption:attack_node_independence}.
\end{proof}

In the following, we leverage the result of Lemma~\ref{lem:indepedent_influence} to show the geometric representation of the constraint in \eqref{Q_min_sdp_short}.
Intuitively, Lemma~\ref{lem:indepedent_influence} implies that the matrix $(K_W E_\Ac)^\top K_W E_\Ac$ is positive definite when the influence of an attack signal on the network cannot be replaced with a linear combination of the other attack signals. Clearly, this scenario is beneficial to the adversary, enabling us to consider the worst-case attack scenario afterward. From Lemma~\ref{lem:indepedent_influence}, one has
\begin{align}
    (K_W E_\Ac)^\top K_W E_\Ac = \sum_{k = 1}^\alpha \lambda_k v_k v_k^\top,~\norm{v_k}_2 = 1 ~\forall k, \label{def_EW_decomposition}
\end{align}
where $(\lambda_k, v_k)$ is an eigenpair of the matrix $(K_W E_\Ac)^\top K_W E_\Ac$. Since the matrix $(K_W E_\Ac)^\top K_W E_\Ac$ is symmetric positive definite, its eigenvectors are orthogonal, i.e., $v_k^\top v_j = 0,~ k \neq j$. On the other hand, the LHS of \eqref{Q_min_sdp_short_cons} can be used to represent the following ellipsoid:
\begin{align}
    \Ec_{W} = \{ \vartheta \in \Rbb^\alpha \,|\,
    \vartheta^\top \big( (K_W E_\Ac)^\top K_W E_\Ac  \big)^{-1} \vartheta \leq 1 \}, \label{ellipsoid_EW}
\end{align}
which has principal axes represented by $\sqrt{\lambda_k} v_k~\forall k \in \{1,2,\ldots,\alpha\}$. Therefore, the inequality constraint \eqref{Q_min_sdp_short_cons} holds true if, and only if, its RHS represents an ellipsoid $\Ec_\delta$ such that it contains $\Ec_{W}$ \cite[Chapter 8.4]{boyd2004convex}, where
\begin{align}
    &\hspace{-10pt}
    \Ec_{\delta} =  \{ \vartheta \in \Rbb^\alpha | \non \\
    &\hspace{-10pt}
    \vartheta^\top \bigg( \sum_{m \in \Mc} \gamma_m \delta_m^2 (E_\Ac^\top K_\delta^\top e_m) (E_\Ac^\top K_\delta^\top e_m)^\top  \bigg)^{-1} \vartheta \leq 1 \}. \label{ellipsoid_Ed}
\end{align}

\begin{figure}[!t]
    \centering
    \includegraphics[width=0.8\linewidth]{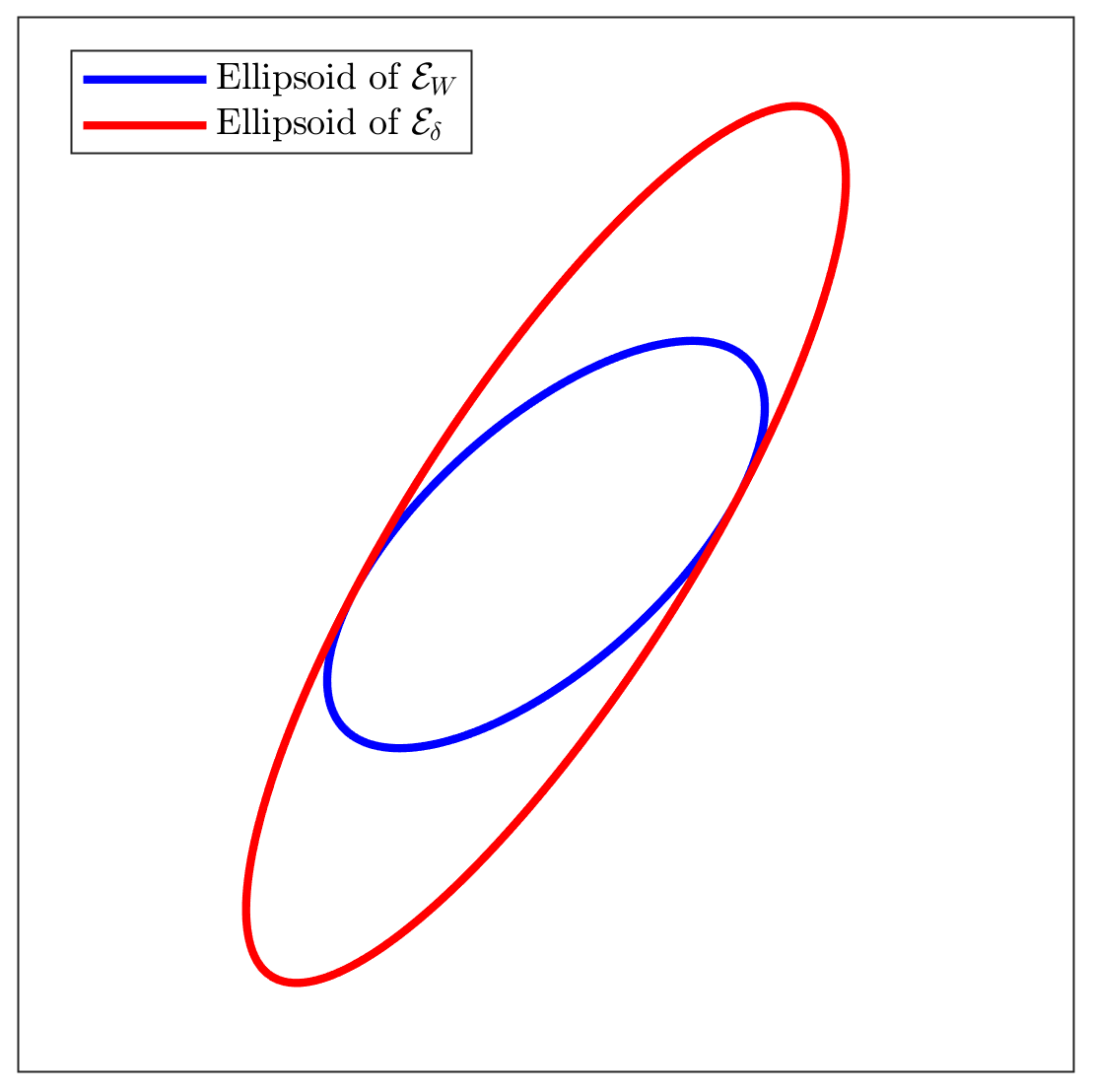}
    \caption{Illustration of the ellipsoids $\Ec_W$ and $\Ec_\delta$ with $\alpha = \beta = 2$. The constraint in \eqref{Q_min_sdp_short_ellipsoid} holds true when $\Ec_\delta$ contains $\Ec_W$.}
    \label{fig:ellipsoid_example}
\end{figure}

The intuition of the relationship between $\Ec_W$ and $\Ec_\delta$ is shown in Fig.~\ref{fig:ellipsoid_example}. 
We leverage the geometric representation of the constraint in \eqref{Q_min_sdp_short} in the previous subsection to rewrite the SDP problem \eqref{Q_min_sdp_short} into the following form:
\begin{align}
    \eqref{Q_min_sdp_short}~\text{with}~\epsilon = 0 \Lra \min_{\gamma_m \in \Rbb_{\geq 0}}& ~ \sum_{m \in \Mc} \gamma_m \delta_m^2 \label{Q_min_sdp_short_ellipsoid} \\
    \text{s.t.}~~&~~~ \Ec_W \subseteq \Ec_\delta, \eqref{ellipsoid_EW},~\eqref{ellipsoid_Ed}. \non
\end{align}
The boundedness of \eqref{Q_min_sdp_short_ellipsoid} is addressed in the following lemma.
\begin{Theorem}
    \label{lem:ellisoid_opt}
    The optimization \eqref{Q_min_sdp_short_ellipsoid} admits a solution if, and only if, there exist $\alpha$ linearly independent vectors $E_\Ac^\top K_\delta^\top e_m$ for all $m \in \Mc$. 
    \QET  
\end{Theorem}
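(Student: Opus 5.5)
Write $b_m \triangleq E_\Ac^\top K_\delta^\top e_m \in \Rbb^\alpha$ for $m \in \Mc$, $S \triangleq (K_W E_\Ac)^\top K_W E_\Ac$, and $R(\gamma) \triangleq \sum_{m \in \Mc} \gamma_m \delta_m^2 b_m b_m^\top$. The plan is to work directly with the matrix form of the constraint, since \eqref{Q_min_sdp_short_ellipsoid} is \eqref{Q_min_sdp_short} with $\epsilon = 0$. There the feasible set is $\{\gamma \geq 0 : S \preceq R(\gamma)\}$. The objective $\sum_m \gamma_m \delta_m^2$ is linear, bounded below by $0$, and the feasible set is closed. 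So "admits a solution" reduces to feasibility, and I will prove both directions of that equivalence.

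\emph{Necessity.} By Lemma~\ref{lem:indepedent_influence}, $S \succ 0$. If $S \preceq R(\gamma)$ for some feasible $\gamma$, then $R(\gamma) \succ 0$, so $\mathrm{rank}\, R(\gamma) = \alpha$. On the other hand, $\mathrm{range}\, R(\gamma) \subseteq \mathrm{span}\{b_m : m \in \Mc\}$. Hence the $b_m$ span $\Rbb^\alpha$, and $\alpha$ of them are linearly independent. Geometrically, a degenerate $\Ec_\delta$, lying in a proper subspace, cannot contain the full-dimensional ellipsoid $\Ec_W$.

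\emph{Sufficiency.} Pick $\alpha$ linearly independent vectors $b_{m_1},\dots,b_{m_\alpha}$ and set $B = [b_{m_1},\dots,b_{m_\alpha}]$, which is invertible. Take $\gamma_{m_k} = t/\delta_{m_k}^2$ and $\gamma_m = 0$ for all other $m$. Then $R(\gamma) = t\, B B^\top \succeq t\, \lambda_{\min}(BB^\top) I$ with $\lambda_{\min}(BB^\top) > 0$. Choosing $t \geq \lambda_{\max}(S)/\lambda_{\min}(BB^\top)$ gives $S \preceq R(\gamma)$, so the problem is feasible.

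\emph{Attainment of the minimum.} It remains to show the infimum is attained. The feasible set is closed because it is defined by an LMI together with $\gamma \geq 0$. Intersecting it with the sublevel set $\{\gamma \geq 0 : \sum_m \gamma_m \delta_m^2 \leq c\}$, where $c$ is the objective value of the feasible point above, gives a compact set, since $\delta_m > 0$. By Weierstrass a minimizer exists.

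The main obstacle is the interpretation of "admits a solution". The paper may intend finiteness of the optimal value, that is, boundedness of the worst-case disruption, instead of existence of a minimizer. The argument above covers both readings: infeasibility corresponds to the value $+\infty$, and feasibility yields a finite attained minimum.
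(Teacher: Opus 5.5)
Your proof is correct. It rests on the same idea as the paper's: the full-dimensional ellipsoid $\Ec_W$ can only be contained in $\Ec_\delta$ if $\Ec_\delta$ is full-dimensional, which forces the vectors $E_\Ac^\top K_\delta^\top e_m$ to span $\Rbb^\alpha$. You phrase this through the rank and range of $R(\gamma)$, while the paper counts principal axes.

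Your version also adds steps the paper leaves implicit. The explicit scaling $t \ge \lambda_{\max}(S)/\lambda_{\min}(BB^\top)$ gives the sufficiency direction, and the compactness argument gives attainment of the minimum. Working with the rank of $R(\gamma)$ also avoids the paper's loose claim that the principal axes of $\Ec_\delta$ are the vectors $E_\Ac^\top K_\delta^\top e_m$ themselves.
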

\begin{proof}
    Based on the illustration in Fig.~\ref{fig:ellipsoid_example},
    we aim to translate the boundedness of \eqref{Q_min_sdp_short_ellipsoid} into the comparison on the number of principal axes of the ellipsoids \eqref{ellipsoid_EW} and \eqref{ellipsoid_Ed}.
    In the following, we count the number of such principal axes in (I) and (III) while their connection is made in (II).

    (I) Since $(K_W E_\Ac)^\top K_W E_\Ac$ is an $\alpha \times \alpha$ symmetric positive definite matrix according to Lemma~\ref{lem:indepedent_influence}, its ellipsoid representation $\Ec_W$ in \eqref{ellipsoid_EW} has $\alpha$ 
    principal axes.

    (II) Geometrically, the constraint of \eqref{Q_min_sdp_short_ellipsoid} holds true only if the number of principal axes of $\Ec_\delta$ is larger than or equal to the number of principal axes of $\Ec_W$, which is $\alpha$ from (I).
    This constraint does not hold true if $\Ec_\delta$ has fewer than $\alpha$ principal axes. Therefore, the necessary condition for the feasibility of \eqref{Q_min_sdp_short_ellipsoid} is that $\Ec_\delta$ has exactly $\alpha$ principal axes. 

    (III)
    Based on the definition of $\Ec_\delta$ in \eqref{ellipsoid_Ed}, this ellipsoid has principal axes represented by $E_\Ac^\top K_\delta^\top e_m$ for all $m \in \Mc$. 
    Note that two different monitor nodes might yield two parallel principal axes, resulting in a single principal axis. Consequently, the number of principal axes of $\Ec_\delta$, which is required to be $\alpha$ from (II), equals the number of the linearly independent vectors $E_\Ac^\top K_\delta^\top e_m$. The proof is concluded. 
\end{proof}


The result of Theorem~\ref{lem:ellisoid_opt} gives us a necessary and sufficient condition on the boundedness of \eqref{Q_min_sdp_short_ellipsoid}. This condition requires that the number of monitor nodes is not fewer than the number of attack nodes to guarantee the optimality of \eqref{Q_min_sdp_short_ellipsoid}. 
The failure of this condition leads to no solution to \eqref{Q_min_sdp_short_ellipsoid}, meaning that the adversary is able to conduct a stealthy attack with infinite attack impact without being detected, which can also be explained through a view from the transfer function perspective as reported in \cite{smith2015covert}.

\begin{Remark}
    [Vulnerable nodes]
Consider a fixed monitor set $\Mc$ and suppose that Assumption~\ref{assumption:attack_node_independence} holds. A direct consequence of Theorem~\ref{lem:ellisoid_opt} is that an attack set $\Ac$ yields no finite
solution to \eqref{Q_min_sdp_short} if the following condition holds true
\[
\text{rank}\!\left(E_{\Ac}^\top K_\delta^\top E_{\Mc}\right) < |\Ac|.
\]
In this case, the monitor-induced ellipsoid $\Ec_\delta$ does not have enough linearly independent growth directions to contain the attack-impact ellipsoid $\Ec_W$, and therefore the worst-case disruption becomes unbounded.
From the defender's viewpoint, this corollary highlights attack sets that are not fully covered by the deployed monitors and therefore require either additional monitor nodes or an alternative monitor placement.
\QET
\end{Remark}


The above remark highlights a practical limitation of a fixed monitor configuration, in which some admissible attack sets may remain insufficiently covered, resulting in an unbounded worst-case disruption. 
While such cases can in principle be handled by solving the exact mixed-integer SDP in~\cite{nguyen2025scalable}, this quickly becomes computationally demanding for large networks. 
Therefore, instead of directly solving Problem~\ref{problem:opt_sel_robust}, we seek a graph-theoretic heuristic that exploits Theorem~\ref{lem:ellisoid_opt} and the ellipsoidal structure of \eqref{Q_min_sdp_short_ellipsoid} to identify promising monitor nodes in the following subsection.

\subsection{Heuristic search for sub-optimal monitor nodes}
In this subsection, we are not directly solving Problems~\ref{problem:opt_sel}-\ref{problem:opt_sel_robust} since their solutions can be found by solving a mixed-integer SDP proposed in our previous work \cite{nguyen2025scalable} due to the fact that this method scales badly with the network size and the number of admissible attack sets. Instead, we aim to leverage the result of Theorem~\ref{lem:ellisoid_opt} and the geometric interpretation (see Fig.~\ref{fig:ellipsoid_example}) from the previous subsection to characterize solutions to Problems~\ref{problem:opt_sel}-\ref{problem:opt_sel_robust} with the aim of bypassing the need of solving an optimization problem. These provide us with two key observations:
\begin{enumerate}
    \item[O1)] A monitor set $\Mc$ must fulfill the necessary and sufficient condition in Theorem~\ref{lem:ellisoid_opt} for a given attack set $\Ac$ in addressing Problem~\ref{problem:opt_sel}; and for every admissible attack set in addressing Problem~\ref{problem:opt_sel_robust}.
    \item[O2)] While $\Ec_W$ is decomposed along its principal axes as in \eqref{def_EW_decomposition}, $\Ec_\delta$ is generated by a conic combination of rank-one matrices of the form $(E_\Ac^\top K_\delta^\top e_m) (E_\Ac^\top K_\delta^\top e_m)^\top$. Hence, $E_\Ac^\top K_\delta^\top e_m$ acts as a growth direction of $\Ec_\delta$ with coefficient $\gamma_m \delta_m^2$. This suggests selecting monitor nodes whose growth directions are well aligned with the principal axes of $\Ec_W$, so as to reduce the corresponding value of $\gamma_m \delta_m^2$  in the objective of \eqref{Q_min_sdp_short_ellipsoid}.
\end{enumerate}

Based on the above observations, the heuristic search is designed in the sequel. 
It is worth noting that Problem~\ref{problem:opt_sel} is a special case of Problem~\ref{problem:opt_sel_robust} when we only have one admissible attack set. For simplicity, the following explains the heuristic search for addressing Problem~\ref{problem:opt_sel}, and the search is generalized to address Problem~\ref{problem:opt_sel_robust} in Algorithm~\ref{alg:principal_monitor}.
For each admissible attack set $\Ac_i$, we compute the eigenpairs  $\{(\lambda_k(\Ac_i),v_k(\Ac_i))\}_{k=1}^{\alpha}$ of $(K_W E_{\Ac_i})^\top (K_W E_{\Ac_i})$. As discussed in Observation O2), the monitor-induced ellipsoid $\Ec_\delta$ must grow in sufficiently many independent directions to contain $\Ec_W$ according to Theorem~\ref{lem:ellisoid_opt}. As a result, we find a monitor node, denoted $m^\star_{k}(\Ac_i)$, such that it maximizes the dot product between $E^\top_{\Ac_i} K^\top_\delta e_{m^\star_{k}(\Ac_i)}$ and $v_k(\Ac_i)$ for every $v_k(\Ac_i)$, which is consistent with Observation O2). This procedure is repeated for every admissible attack set $\Ac_i$ with the aim of addressing Problem~\ref{problem:opt_sel_robust}, which motivates us to define a score vector as $s_\Mc \in \Rbb^N_{\ge 0}$. This score vector is updated when addressing each $\Ac_i$ in the following.

Consider the score vector $s_\Mc$, for each attack set 
$\Ac_i$ and each corresponding principal direction $v_k(\Ac_i)$ indexed by $k \in \{1,\dots,\alpha\}$, define
\begin{align}
	[s_\Mc]_{m_{k}^\star(\Ac_i)}
	~\leftarrow~
	[s_\Mc]_{m_{k}^\star(\Ac_i)}
	+ 
	\frac{\lambda_k(\Ac_i)}{[\Theta_{k}(\Ac_i)]_{{m_{k}^\star(\Ac_i)}}},
	\label{def_score_value}
\end{align}
where
\begin{align}
	{m_{k}^\star(\Ac_i)}
	&\triangleq 
	\argmax_{m \in \{1,2,\ldots,N\}}
	~[\Theta_{k}(\Ac_i)]_m ,
	\label{def_m_star_all}
	\\
	\Theta_{k}(\Ac_i)
	&\triangleq 
	\big[ K_\delta E_{\Ac_i} v_k(\Ac_i) \big]
	\odot
	\big[ K_\delta E_{\Ac_i} v_k(\Ac_i) \big].
	\label{def_Theta_all}
\end{align}
Here, $\odot$ stands for element-wise multiplication.
The use of the squared projection in~\eqref{def_Theta_all} follows directly from 
the quadratic structure of the SDP constraint \eqref{Q_min_sdp_short_cons}. Indeed, for any eigenpair 
$(\lambda_k(\Ac_i),v_k(\Ac_i))$ of $(K_W E_{\Ac_i})^\top (K_W E_{\Ac_i})$, 
multiplying the LMI constraint in~\eqref{Q_min_sdp_short_cons} on the left and 
right by $v_k(\Ac_i)^\top$ and $v_k(\Ac_i)$, respectively, yields
\begin{align}
	\lambda_k(\Ac_i)
	~\le~
	\sum_{m\in\Mc}
	\gamma_m \delta_m^2
	\big( v_k(\Ac_i)^\top E_{\Ac_i}^\top K_\delta^\top e_m \big)^2 .
	\label{lambda_k_projection}
\end{align}
Hence, the contribution of monitor node $m$ to covering the $k$-th principal direction $v_k(\Ac_i)$ is governed by the squared alignment term
\begin{align}
	\big( v_k(\Ac_i)^\top E_{\Ac_i}^\top K_\delta^\top e_m \big)^2,
\end{align}
which motivates the definition of $\Theta_{k}(\Ac_i)$ in 
\eqref{def_Theta_all}. Furthermore, by the definition of 
$m_{k}^\star(\Ac_i)$ in \eqref{def_m_star_all}, we have
\begin{align}
	\lambda_k(\Ac_i)
	~\le~
	\big(
	v_k(\Ac_i)^\top 
	E_{\Ac_i}^\top
	K_\delta^\top
	e_{m_{k}^\star(\Ac_i)}
	\big)^2
	\sum_{m\in\Mc} \gamma_m \delta_m^2,
\end{align}
which implies the lower bound
\begin{align}
	\frac{ \lambda_k(\Ac_i) }
	{ \big( v_k(\Ac_i)^\top 
		E_{\Ac_i}^\top 
		K_\delta^\top 
		e_{m_{k}^\star(\Ac_i)} 
		\big)^2 }
	~\le~
	\sum_{m\in\Mc} \gamma_m \delta_m^2.
	\label{lambda_k_frac}
\end{align}

Since the right-hand side of~\eqref{lambda_k_frac} is the objective of 
\eqref{Q_min_sdp_short_obj}, which is minimized in the SDP, the quantity in the left-hand side provides a directional lower bound on the objective contribution 
required to cover the $k$-th principal axis of $\Ec_W$. This justifies the score update 
in~\eqref{def_score_value}. 

By using the same computation \eqref{def_score_value} for all admissible attack sets, we have a score vector $s_\Mc$, which ranks all the nodes as candidate monitor nodes. Next, we pick $\beta$ nodes with the highest score values to form a monitor set. Since the selected monitor set must satisfy the necessary and sufficient condition in Theorem~\ref{lem:ellisoid_opt}, we need to choose more monitor nodes until the necessary and sufficient condition in Theorem~\ref{lem:ellisoid_opt} holds for every admissible attack set. This heuristic search is summarized in Algorithm~\ref{alg:principal_monitor}. Clearly, Algorithm~\ref{alg:principal_monitor} is not guaranteed to return the optimal monitor set. Instead, it yields a feasible candidate monitor set that may contain more than $\beta$ monitor nodes after the rank-repair step (see lines 14-15 in Algorithm~\ref{alg:principal_monitor}), potentially leading to a lower attack impact than a budget-constrained optimal monitor set.

\begin{algorithm}[!t]
	\caption{Principal-direction heuristic for monitor selection}
	\label{alg:principal_monitor}
	\begin{algorithmic}[1]
		\Statex{\textbf{Input:}} In-degree Laplacian $L$, weighting matrix $W$, alarm threshold $\delta$, attack budget $\alpha$, monitor budget $\beta$, and a collection of admissible attack sets $\{\Ac_i\}$.
		\Statex{\textbf{Output:}} Feasible candidate monitor set $\Mc^\star$.
		\Statex{\textbf{Initialization:}} score vector $s_\Mc = 0 \in \Rbb^N$.
		
		\State Compute $K_\delta$ from \eqref{Katz_monitor_mod}.
		\State Compute $K_W$ from \eqref{Katz_performance_mod}.
		
		\For{each attack set $\Ac_i$}
		\State Form $S_i = (K_W E_{\Ac_i})^\top (K_W E_{\Ac_i})$.
		\State Compute all eigenpairs $\{(\lambda_k(\Ac_i), v_k(\Ac_i))\}_{k=1}^{\alpha}$ of $S_i$.
		
		\For{$k = 1$ to $\alpha$}
		\State Compute:
		\[
		\Theta_k(\Ac_i ) = 
		\big[ K_\delta E_{\Ac_i} v_k(\Ac_i) \big]
		\odot
		\big[ K_\delta E_{\Ac_i} v_k(\Ac_i) \big].
		\]
		
		\State Find the best-aligned monitor node:
		\[
		m_{k}^\star(\Ac_i) \in \argmax_{m \in \{1,\dots,N\}} [\Theta_{k}(\Ac_i)]_m.
		\]
		
		\State Update the score:
		\[
		[s_\Mc]_{m_{k}^\star(\Ac_i)}
		\leftarrow
		[s_\Mc]_{m_{k}^\star(\Ac_i)}
		+ \frac{\lambda_k(\Ac_i)}{[\Theta_{k}(\Ac_i)]_{m_{k}^\star(\Ac_i)}}.
		\]
		\EndFor
		\EndFor
		
		\State Select the initial monitor set:
		$
		\Mc^\star = \{ m^{(1)},\dots,m^{(\beta)} \}
		\quad\text{indices of the $\beta$ largest entries of } s_\Mc.
		$
		\State Define $E_{\Mc^\star} = [e_{m^{(1)}}, \dots, e_{m^{(\beta)}}]$.
		
		\While{there exists $\Ac_i$ such that 
			\[
				\text{rank}(E_{\Ac_i}^\top K_\delta^\top E_{\Mc^\star}) < \alpha
			\]
			}
		\State Add to $\Mc^\star$ the highest-ranked node in $s_\Mc$ not yet selected.
		\State Update $E_{\Mc^\star}$ accordingly.
		\EndWhile
		
		\State \Return $\Mc^\star$.
	\end{algorithmic}
\end{algorithm}

\section{Simulation Results}
\label{sec:simulation}
In this section, we validate the obtained results through Erdős–Rényi random graphs. 
The simulation is performed using MATLAB R2023b with YALMIP 2021 \cite{lofberg2004yalmip}
and MOSEK solver on a personal computer with a 2.9-GHz eight-core Intel i7-10700 processor and 16 GB of RAM. 
The security parameters are chosen as $\epsilon = 0.1$, $\delta_i = 1 + \mu_i$, and $w_i = 1+\mu_i$ with random $\mu_i \in [0,\,0.1]$. In the following, we aim at validating the results presented in Theorems~\ref{thm:robust_security} and showing the performance of Algorithm~\ref{alg:principal_monitor}.
\begin{figure}[!t]
    \centering
    \includegraphics[width=\linewidth]{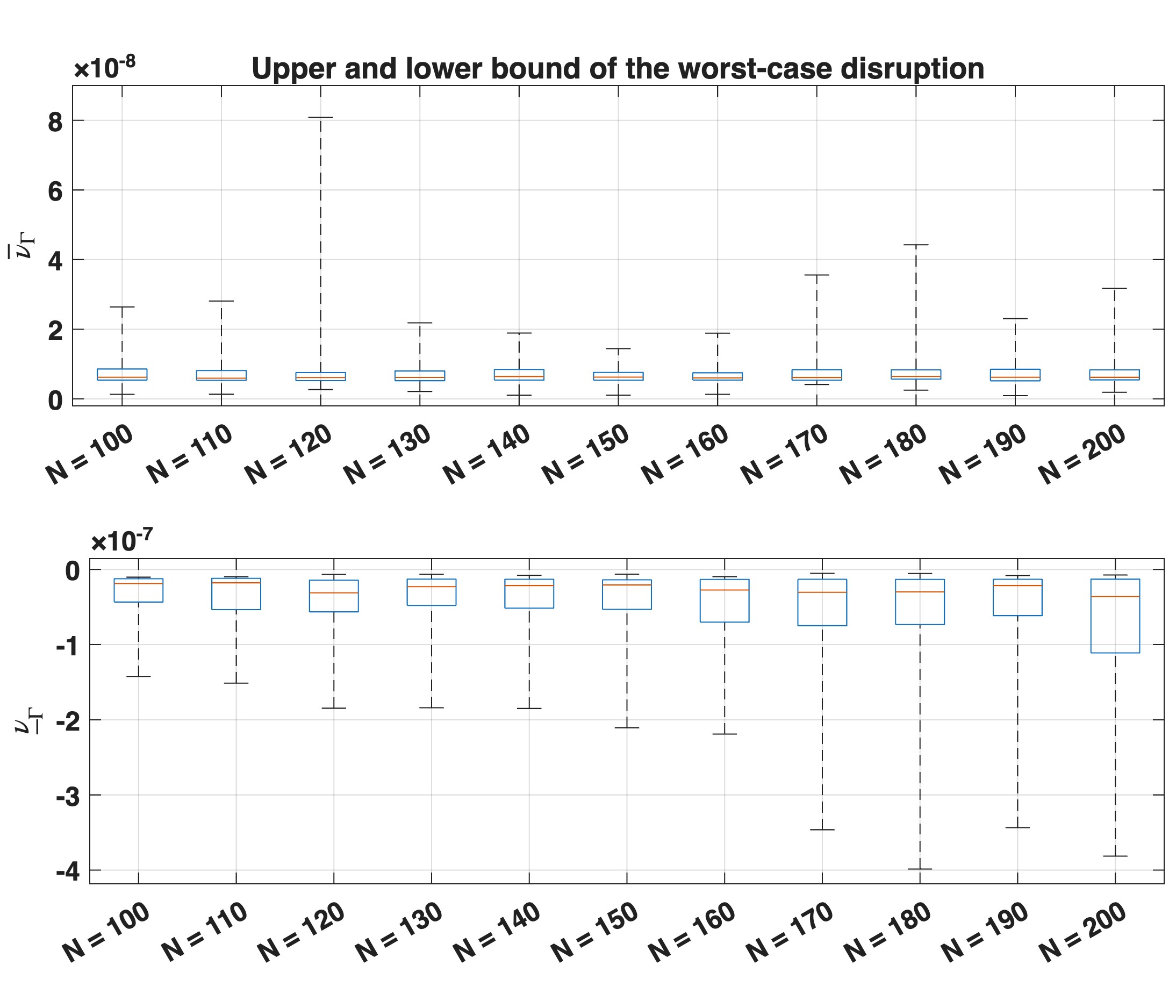}
    \caption{Relative difference between the upper bound $\Gamma(P \succeq 0, P\text{ diagonal})$ and lower bound  $\Gamma(\text{non-}P)$ for the worst-case disruption \eqref{Q_sup} across different network sizes.}
    \label{fig:Q_sup_upper_lower}
\end{figure}
\begin{figure}[!t]
    \centering
    \begin{subfigure}
        [t]{0.98\linewidth}
        \includegraphics[width=\linewidth]{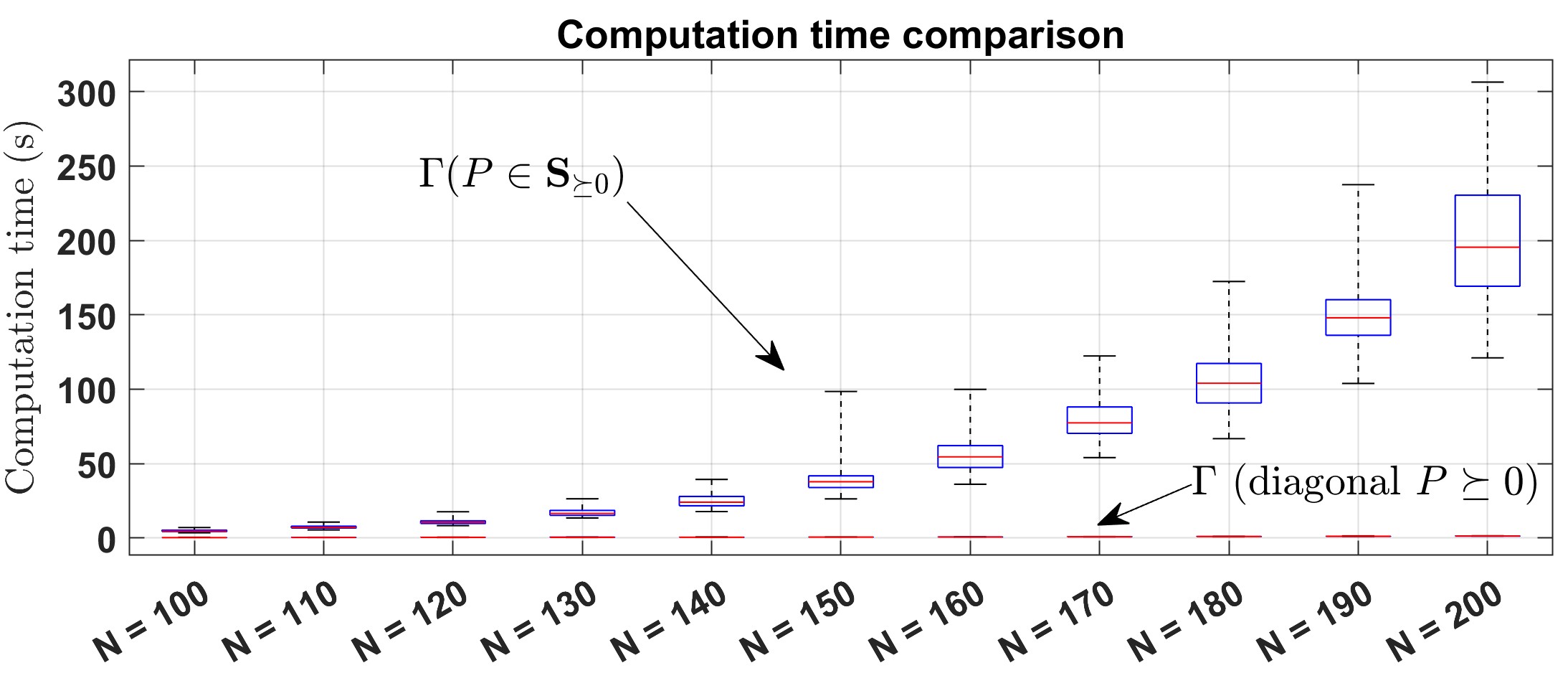}
        \caption{}
    \end{subfigure}
    \begin{subfigure}
        [t]{0.98\linewidth}
        \includegraphics[width=\linewidth]{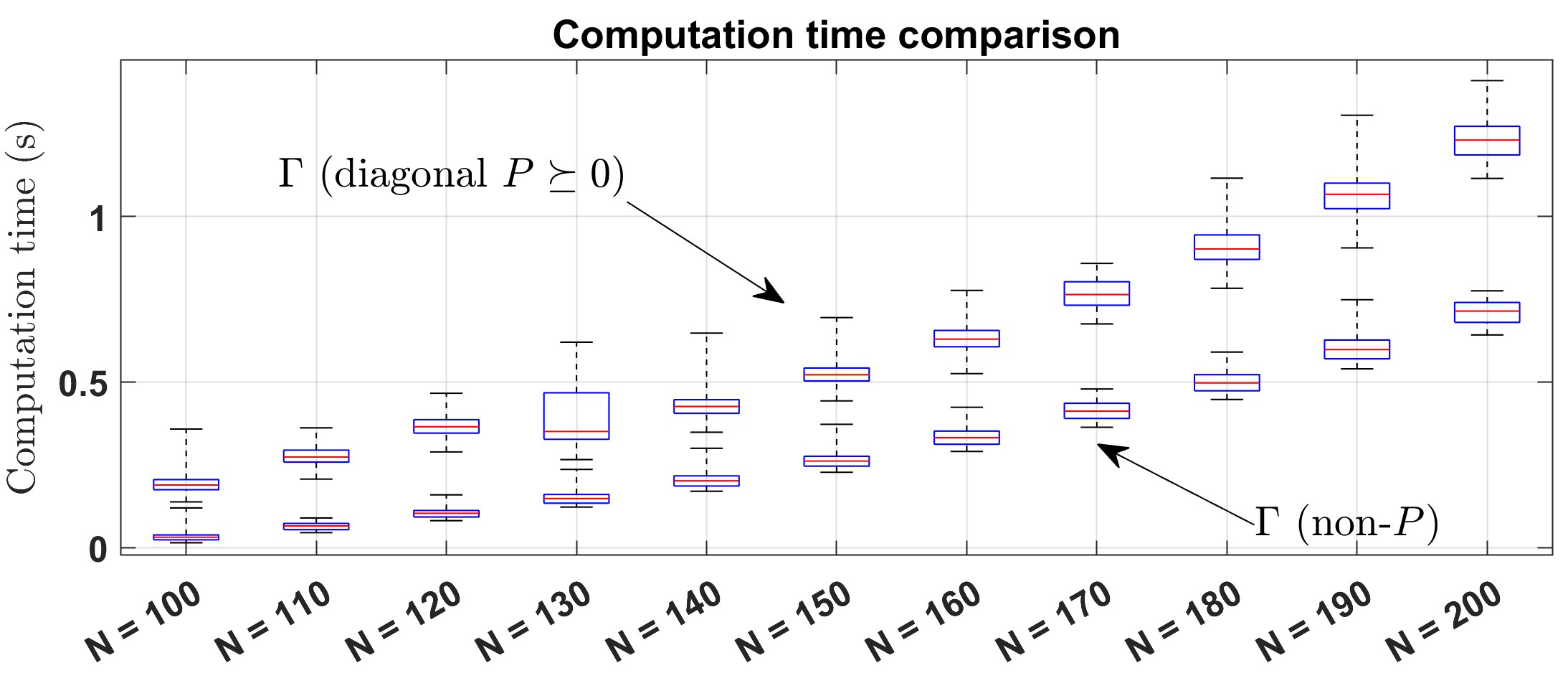}
        \caption{}
    \end{subfigure}
    \caption{
    Computation time comparison for (a) the full SDP  $\Gamma(P \in \Sbb^N_{\succeq 0})$ versus its diagonal relaxation, and  (b) the diagonal relaxation versus the non-$P$ formulation \eqref{Q_min_sdp_short}.}
    \label{fig:time_Q_sup_upper_lower}
\end{figure}
\subsection{Validation of Theorem~\ref{thm:robust_security}}
In the first part, we choose $100$ Erdős–Rényi random directed graphs, where an edge is included to connect two vertices with a probability
of $0.25$, with different network sizes $N = \{100, 110, ..., 200\}$. For each network sample, attack nodes and monitor nodes are randomly chosen as $10\%$ of the nodes and we compute $\Gamma(P \in \Sbb^N_{\succeq 0})$, $\Gamma(P \succeq 0, P ~\text{is diagonal})$, and $\Gamma(\text{non-}P)$ in \eqref{Q_min_sdp_short}. Let us denote the following notation: 
\begin{align}
    \bar \nu_\Gamma &\triangleq \frac{\Gamma(P \succeq 0, P ~\text{is diagonal}) - \Gamma(P \in \Sbb^N_{\succeq 0})}{\Gamma(P \in \Sbb^N_{\succeq 0})}, \\
    \underline \nu_\Gamma &\triangleq \frac{\Gamma(\text{non-}P) - \Gamma(P \in \Sbb^N_{\succeq 0})}{\Gamma(P \in \Sbb^N_{\succeq 0})},
\end{align}
where $\bar \nu_\Gamma$ and $\underline \nu_\Gamma$ stand for the relative upper bound and relative lower bound for $\Gamma(P \in \Sbb^N_{\succeq 0})$, respectively.
The simulation results with different network sizes are reported in Fig.s~\ref{fig:Q_sup_upper_lower}-\ref{fig:time_Q_sup_upper_lower}.  From Fig.~\ref{fig:Q_sup_upper_lower}, we observe that the relative differences between $\Gamma(P \in \Sbb^N_{\succeq 0})$, $\Gamma(P \succeq 0, P ~\text{is diagonal})$, and $\Gamma(\text{non-}P)$ are very small, less than $4 \times 10^{-5} (\%)$ across all the network sizes, which confirms the result presented in Theorem~\ref{thm:robust_security}. We also see the computation time for $\Gamma(P \succeq 0, P ~\text{is diagonal})$ and $\Gamma(\text{non-}P)$ are significantly less than that for $\Gamma(P \in \Sbb^N_{\succeq 0})$ in Fig.~\ref{fig:time_Q_sup_upper_lower}, which confirms that $\Gamma(P \succeq 0, P ~\text{is diagonal})$ and $\Gamma(\text{non-}P)$ do scale better with network sizes.

In Theorem 1, we showed that if the sufficient condition \eqref{cond:W2_Hinf} is satisfied, $\Gamma(P \in \Sbb^N_{\succeq 0})$, $\Gamma(P \succeq 0, P~\text{is diagonal})$, and $\Gamma(\text{non-}P)$ yield the same value, which has been confirmed in the previous simulation results (see Fig.~\ref{fig:Q_sup_upper_lower}). A natural question is whether $\Gamma(P \succeq 0, P~\text{is diagonal})$ and $\Gamma(\text{non-}P)$ still remain meaningful when \eqref{cond:W2_Hinf} does not hold. Since
computing $\Gamma(P \in \Sbb^N_{\succeq 0})$ can be challenging in practice due to its heavy hardware demand, we evaluate the gap between $\Gamma(P \succeq 0, P~\text{is diagonal})$ and $\Gamma(\text{non-}P)$, which serve as the upper bound and the lower bound for $\Gamma(P \in \Sbb^N_{\succeq 0})$, respectively. Fig.~\ref{fig:Q_sup_ratio} shows this gap for $100$ Erdős–Rényi random directed graphs with network sizes ranging from $N = 100$ to $N = 200$. For each network sample, attack nodes and monitor nodes are randomly chosen as $10\%$ of the nodes. Although \eqref{cond:W2_Hinf} does not hold in these experiments, the ratio remains close to $1$ across $100$ trials per network size. Moreover, the median ratio is consistently equal to $1$, the $75$-th percentile ratio varies in $[1,~1.29]$, and they do not grow with the network size. These results confirm that $\Gamma(P \succeq 0, P~\text{is diagonal})$ and $\Gamma(\text{non-}P)$ provide a meaningful estimate for the worst-case impact of stealthy attacks for large networks.

\begin{figure}[!t]
    \centering
    \includegraphics[width=\linewidth]{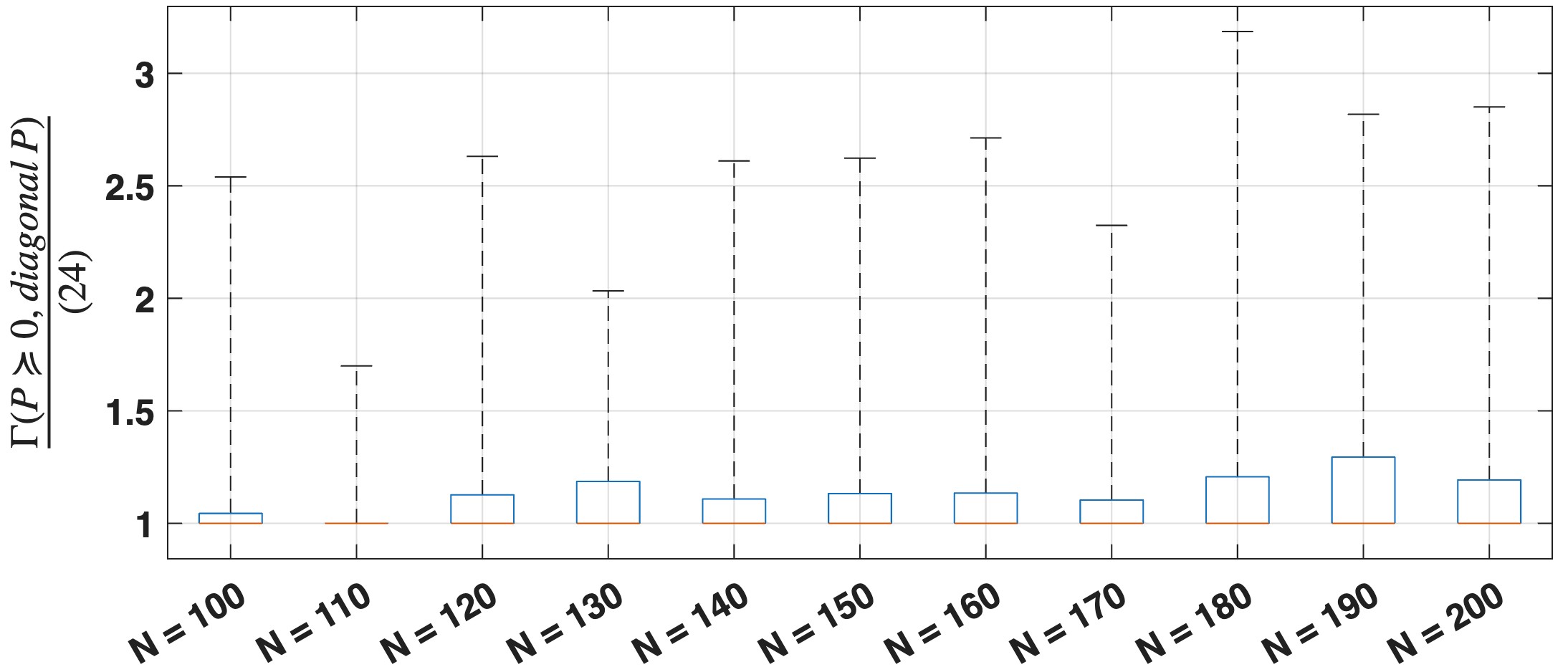}
    \caption{Ratio of the upper and lower bounds of the worst-case disruption \eqref{Q_sup} with different network sizes.}
    \label{fig:Q_sup_ratio}
\end{figure}

\subsection{Validation of Algorithm~\ref{alg:principal_monitor}}
In this part, we aim to show the efficiency of  Algorithm~\ref{alg:principal_monitor} by comparing the worst-case disruption induced by the heuristic-search monitor set, denoted by $\Mc^\star_{\text{Katz}}$, found from Algorithm~\ref{alg:principal_monitor} and the budget-constrained optimal monitor set, denoted by $\Mc^\star_{\text{MISDP}}$ found by solving a mixed-integer SDP \cite{nguyen2025scalable}, where we choose $\beta = \alpha + 1$. In the comparison, we show the following values:
\begin{align}
    \rho_Q &=  \frac{\displaystyle \max_{|\Ac| =\alpha} Q(\Mc^\star_{\text{Katz}}, \Ac) - \max_{|\Ac| =\alpha} Q(\Mc^\star_{\text{MISDP}}, \Ac)}{\displaystyle \max_{|\Ac| =\alpha} Q(\Mc^\star_{\text{MISDP}}, \Ac)}, \\
    \Delta \beta &= |\Mc^\star_{\text{Katz}}| - \beta.
\end{align}
It is worth noting that $\rho_Q$ is not negative if $|\Mc^\star_{\text{Katz}}| = \beta$ since $\Mc^\star_{\text{MISDP}}$ is the budget-constrained optimal monitor set \cite[Theorem 1]{nguyen2025scalable}. However, the rank-repair step (lines 14-15 in Algorithm~\ref{alg:principal_monitor}) allows $\Mc^\star_{\text{Katz}}$ to have more nodes than $\beta$, resulting in $\rho_Q$ being negative. As discussed in Remark~\ref{rem:prob_clarify}, the mixed-integer SDP is NP-hard and requires a large computational resource. Due to hardware limitations, we mainly perform the simulations on networks of size $N = 20$ and $N = 30$. For each network, we take $100$ Erdős–Rényi samples with a probability of $0.25$.

The simulation result is reported in Fig.~\ref{fig:heuristic_result}. For both network sizes, the median performance ratio is close to zero (see the left panel of Fig.~\ref{fig:heuristic_result}), indicating that the heuristic achieves a comparable result to the budget‑constrained optimal solution. Instances where the ratio is negative correspond to cases in which the heuristic selects more than $\beta$ monitor nodes due to the rank‑repair step, resulting in a lower worst‑case disruption than the budget‑constrained optimal solution. The right panel of Fig.~\ref{fig:heuristic_result} reports the number of additional monitor nodes selected by
Algorithm~\ref{alg:principal_monitor} beyond the prescribed budget~$\beta$.
For both network sizes, the median and the 75th percentile of $\Delta\beta$
are equal to~1, indicating that the heuristic requires only a single additional monitor node in the vast majority of cases. This behavior reflects the fact
that many random graph instances do not admit any feasible monitor set of size
$\beta$ due to the rank condition in Theorem~\ref{lem:ellisoid_opt}. In these
cases, Algorithm~1 increases the number of monitors until feasibility is
restored, which explains the occasional presence of large outliers. From a
practical perspective, deploying one extra monitor node is typically a minor increase in resource usage, especially when it prevents unbounded worst-case impact and yields a performance close to the budget-constrained optimal
selection.

\begin{figure} [!t]
    \centering
    \includegraphics[width=\linewidth]{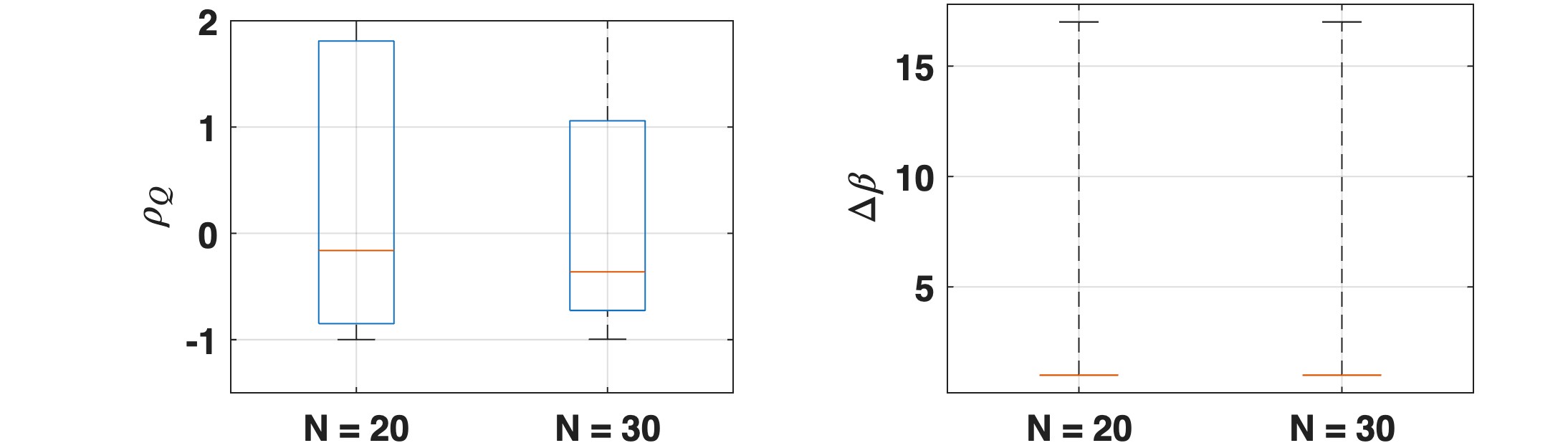}
    \caption{Comparison between the heuristic monitor set $\Mc^\star_{\text{Katz}}$ returned by Algorithm~\ref{alg:principal_monitor} and the budget-constrained optimal monitor set $\Mc^\star_{\text{MISDP}}$ obtained via the mixed-integer SDP in~\cite{nguyen2025scalable}.  (Left) performance ratio $\rho_Q$ over 100 Erdős–Rényi random graphs. (Right) number of additional monitor nodes  $\Delta\beta$ selected by Algorithm~1 beyond the prescribed budget $\beta=\alpha+1$.}
    \label{fig:heuristic_result}
\end{figure}

\section{Conclusion}
\label{sec:conclusion}
This paper studied the security of continuous‑time positive networked control systems under stealthy false data injection attacks. By leveraging the positivity of the system dynamics, we derived an SDP characterization of the worst‑case disruption and showed that, under a sufficient robustness condition, this SDP becomes exact and simplifies to a network-size-independent formulation. The resulting expression exposes a direct connection between the security metric and two adapted Katz-like centrality measures, offering a graph‑theoretic interpretation of attack impact and monitor effectiveness. When the robustness condition is not satisfied, the proposed diagonal and non‑$P$ relaxations provide computationally efficient upper and lower bounds, yielding meaningful estimates for large networks. Leveraging the geometric structure of the simplified problem, we developed a principal‑direction heuristic for monitor allocation that avoids solving optimization problems. Simulations on Erdős-Rényi networks demonstrated that the heuristic achieves sub-optimal performance while requiring only a small budget increase to restore feasibility. Overall, the proposed analysis and heuristic tools offer a scalable and practical solution for securing large positive networks against stealthy false data injection attacks.
\bibliographystyle{ieeetr}
\bibliography{mybibfile}

\appendices
\section{KYP Lemma for positive systems}
\label{app:kyplemma_positive}
We next provide a result, which is adopted from \cite[Theorem 1]{rantzer2015kalman} and \cite[Chapter 3]{moylan2014dissipative}.
\begin{Proposition}\label{thm_anders}
Let us consider a continuous LTI system $\Sigma$ with a state-space model: $\dot x(t) = A x(t) + B u(t),~y_1(t) = C_1 x(t),~y_2(t) = C_2 x(t)$ 
where $A$ is a Metzler matrix,
${B} \in \Rbb^{n \times m}_{\geq0}, {C_1} \in \Rbb^{p_1 \times n}_{\geq0},$ and ${C_2} \in \Rbb^{p_2 \times n}_{\geq0}$.
Let us consider a quadratic supply rate {$\hat s(t) = \|\Gamma^{\frac{1}{2}} u(t)\|_2^2 +  \|y_1(t)\|_2^2 - \|y_2(t)\|_2^2$ where $\Gamma$ is a positive diagonal matrix} and let $C_2^\top C_2 \geq C_1^\top C_1$ where the inequality applies element-wise. Then, the following statements are equivalent. 
\begin{enumerate}
    \item[P1)] For all trajectories of the system with $x(0)=0, x(\infty) = 0$, we have $\int_{0}^\infty \hat s(t) \text{d}t \geq 0$;
    \item[P2)] There exists a symmetric matrix $P$ such that
    \begin{equation}
    \hspace{-0.3cm}
        \begin{bmatrix}
            A^\top P + PA & PB \\ B^\top P & 0 
        \end{bmatrix} + \ba{cc}
        C_2^\top C_2 - C_1^\top C_1 & 0\\ 0 & {-\Gamma}
        \ea \preceq 0. \non
    \end{equation}
    \item[P3)] For all $\omega \in [0, \infty]$, it is true that 
    \begin{align}
        \ba{cc} 
        (i \omega I - A)^{-1} B \\ I \ea^\ast  \ba{cc}
        C_2^\top C_2 - C_1^\top C_1 & 0\\ 0 & {-\Gamma}
        \ea& \non \\
        &\hspace{-3.5cm}
        \times \ba{cc} 
        (i \omega I - A)^{-1} B \\ I \ea \preceq 0. \non 
    \end{align}
    \item[P4)] 
    \begin{align}
        \ba{cc} 
        -A^{-1} B \\ I \ea^\ast \ba{cc}
        C_2^\top C_2 - C_1^\top C_1 & 0\\ 0 & {-\Gamma}
        \ea& \non \\ 
        & \hspace{-1.5cm}
        \times \ba{cc} 
        - A^{-1} B \\ I \ea \preceq 0. \non
    \end{align}
    \item[P5)] There exists a non-negative diagonal matrix $P$ such that 
    \begin{equation}\label{LMI_anders}
    \hspace{-0.3cm}
        \begin{bmatrix}
            A^\top P + PA & PB \\ B^\top P & 0 
        \end{bmatrix} + \ba{cc}
        C_2^\top C_2 - C_1^\top C_1 & 0\\ 0 & {-\Gamma}
        \ea \preceq 0. \non
    \end{equation}
\end{enumerate}
Here, $(\cdot)^\ast$ stands for the conjugate transpose. 
\QET
\end{Proposition}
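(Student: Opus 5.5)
The plan is to show the five statements are equivalent through the cycle P5) $\Rightarrow$ P2) $\Rightarrow$ P1) $\Rightarrow$ P3) $\Rightarrow$ P4) $\Rightarrow$ P5). All the easy implications come from standard dissipativity and frequency-domain arguments. The only place positivity is needed is the last link, P4) $\Rightarrow$ P5). Throughout, I would assume $A$ is Hurwitz, which holds in the paper's application since $A=-L$. This makes $(i\omega I-A)^{-1}$ well defined for every $\omega$ and makes $-A^{-1}$ entrywise non-negative, because $A$ is Metzler.

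\emph{P5) $\Rightarrow$ P2)} is immediate, since a non-negative diagonal matrix is in particular symmetric.

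\emph{P2) $\Rightarrow$ P1).} Take the storage function $V(x)=x^\top P x$. Multiply the LMI in P2) by $[x^\top,u^\top]$ on the left and by its transpose on the right. This gives $\dot V \le \hat s(t)$ along trajectories. Integrating from $0$ to $\infty$ with $x(0)=x(\infty)=0$ yields $\int_0^\infty \hat s\,\text{d}t\ge0$.

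\emph{P1) $\Rightarrow$ P3).} Apply Parseval's identity to inputs $u\in\Lc_2$. The integral of the supply becomes $\frac{1}{2\pi}\int \hat u(i\omega)^\ast \Pi(\omega)\hat u(i\omega)\,\text{d}\omega$, where $\Pi(\omega)$ is the Hermitian form in P3) with the sign flipped. If P3) failed at some $\omega_0$, I would choose inputs whose spectrum concentrates near $\omega_0$ along a violating direction. This produces a negative integral and contradicts P1).

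\emph{P3) $\Rightarrow$ P4)} is the case $\omega=0$, since $(0\cdot I-A)^{-1}B=-A^{-1}B$.

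The main obstacle is \emph{P4) $\Rightarrow$ P5)}: building a diagonal certificate from a single matrix inequality at zero frequency. I would follow Rantzer's argument in \cite[Theorem 1]{rantzer2015kalman}.

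1. Set $G=-A^{-1}B\ge0$ entrywise. Because $C_2^\top C_2-C_1^\top C_1$ is entrywise non-negative, P4) reads $G^\top M G\preceq \Gamma$ with $M\ge0$ entrywise and $\Gamma$ diagonal. For a non-negative matrix tested against a diagonal bound, it suffices to check non-negative vectors. So P4) is equivalent to $u^\top G^\top M G u\le u^\top\Gamma u$ for all $u\ge0$.

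2. Use $|Gu|\le G|u|$ together with entrywise monotonicity to lower-bound the frequency-domain form at every $\omega$ by its zero-frequency value. The key inequality is $|(i\omega I-A)^{-1}B\hat u|\le -A^{-1}B|\hat u|$, which holds because $A$ is Metzler. This shows P4) $\Rightarrow$ P3) and hence that the supply is dissipative, but it does not yet give a diagonal $P$.

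3. To get a diagonal $P$, construct it explicitly as $P=\textbf{diag}(p)$. Choose $p=-A^{-\top}\big(C_2^\top C_2-C_1^\top C_1\big)\xi$, with $\xi=Gu^\star$ built from a Perron-type vector $u^\star$ of the zero-frequency inequality, so that $p\ge0$. Then verify the LMI using the diagonal dominance supplied by the Metzler structure, applying a Schur complement with respect to $\Gamma$ and the fact that $A^\top P+PA$ is Metzler.

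If the explicit construction in step 3 proves unwieldy, I would fall back on the paper's cited result. Rantzer's Theorem 1 gives exactly the equivalence of the zero-frequency condition with a diagonal solution, under the stated sign hypotheses. Applying it to the realization $(A,[B],C)$, with the quadratic form $C_2^\top C_2-C_1^\top C_1$ and $-\Gamma$, would close the cycle.
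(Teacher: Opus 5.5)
Your links P5)$\Rightarrow$P2)$\Rightarrow$P1)$\Rightarrow$P3)$\Rightarrow$P4) are fine. Making the Hurwitz assumption explicit is also right, since P3)--P4) need it. The closing step P4)$\Rightarrow$P5), however, has a genuine gap.

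As written, step~3 mislabels the unknown. With $\xi=Gu^\star$ one has $A\xi+Bu^\star=0$, so the first block row of the LMI matrix $N$ applied to $(\xi;u^\star)$ equals $A^\top P\xi+M\xi$, where $M\triangleq C_2^\top C_2-C_1^\top C_1$. Hence your vector $q\triangleq-A^{-\top}M\xi\geq0$ is $P\xi$, not the diagonal of $P$; the correct choice is $p_i=q_i/\xi_i$. With that choice the second block row is $B^\top q-\Gamma u^\star=(G^\top MG-\Gamma)u^\star\leq0$. Here $u^\star>0$ exists by applying Perron--Frobenius to each irreducible diagonal block of the symmetric Metzler matrix $G^\top MG-\Gamma\preceq0$. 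Since $N$ is itself symmetric Metzler, $N(\xi;u^\star)\leq0$ with $(\xi;u^\star)>0$ gives $N\preceq0$ by scaled diagonal dominance, and no Schur complement is needed. This construction requires $\xi>0$, which fails as soon as $G=-A^{-1}B$ has a zero row, i.e.\ a state not reachable from the input. In that situation P4)$\Rightarrow$P5) is genuinely false, so the fallback citation cannot close the cycle either.

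For a counterexample, take $A=-I_2$, $B=[1~0]^\top$, $C_1=0$, $C_2=[1~1]$, $\Gamma=1$. P4) holds with equality, since $G=[1~0]^\top$ and $G^\top MG-\Gamma=0$. Even P2) holds, with $P=\textbf{1}\textbf{1}^\top$: the LMI matrix is then $-vv^\top$ with $v=[1~1~{-1}]^\top$. Yet for $P=\textbf{diag}(p_1,p_2)$, the $2\times2$ principal submatrix of $N$ on indices $\{1,3\}$ has determinant $-(1-p_1)^2$, which forces $p_1=1$. After that, the test vector $[1~t~1]^\top$ gives $2t+(1-2p_2)t^2>0$ for small $t>0$, so P5) fails. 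The paper's proof, which simply invokes Rantzer's theorem for P3)--P5), passes over the same point.

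The repair is to add a reachability hypothesis such as ``$-A^{-1}B$ has no zero row'', which is implied by controllability of $(A,B)$. It holds in the paper's application. There the system matrix is $-L$ and the input matrix is $AE_\Ac$, with $A$ the adjacency matrix. Then $L^{-1}AE_\Ac$ is entrywise positive, because $L^{-1}=\sum_{k\geq0}(D_{\rm in}^{-1}A)^kD_{\rm in}^{-1}$ is entrywise positive for a strongly connected digraph. Under that hypothesis, the corrected construction above makes your cycle a complete proof that needs no citation.
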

\begin{proof}
    The first three statements P1)-P3) are equivalent according to the cyclo-dissipative systems theory \cite[Chapter 3]{moylan2014dissipative}, while the last three statements P3)-P5) are equivalent according to the result in \cite[Theorem 1]{rantzer2015kalman}. Consequently, all the statements are equivalent. 
\end{proof}

\section{Proof of Theorem~\ref{thm:robust_security}}
\label{app:thm:robust_security_pf}
\begin{proof}
    We leverage a supporting result presented in Proposition~\ref{thm_anders} in Appendix~\ref{app:kyplemma_positive}, where all the assumptions of the proposition hold true under the system dynamical parameters and the satisfaction of \eqref{cond:W2_Hinf}. Note that \eqref{cond:W2_Hinf} yields \eqref{cond:in_KYP} to fulfill the condition $C_2^\top C_2 \geq C_1^\top C_1$ in Proposition~\ref{thm_anders}.
    The following part a) focuses on showing the relationship \eqref{Q_sdp_P_diagonal}. 
    
    a)
    To facilitate the proof, we consider a stricter version of \eqref{Q_sup} with a terminal state condition, i.e., $x^a(\infty) = 0$:
    \begin{align}
    \hat Q(\Mc,\Ac) \triangleq 
    ~&\sup_{\zeta \in \Rbb_{>0}^{\alpha}, \zeta \in \Lc_{2e}}~
    \norm{p^a}_{\Lc_2}^2 
    \label{Q_sup_terminal} \\
    \text{s.t.}&~
    \eqref{sys:xa}-\eqref{sys:yma}, \, x^a(0) = x^a(\infty) = 0, \non \\
    &~ \norm{y_m^a}^2_{\Lc_2} \leq \delta_m^2,~ \forall \, m \in \Mc, \non \\
    &~~ \epsilon \norm{\zeta_{a_i}}^2_{\Lc_2[0,\infty]}  \leq 1, ~\forall \, a_i \in \Ac. \non
    \end{align}
    This optimization \eqref{Q_sup_terminal} stands for the worst-case disruption on the network by a stealthy data injection attack, which lasts for a long time but in a finite time. After the attack stops, the state of the network converges to the equilibrium $x^a(\infty) = x_e = 0$ due to the asymptotic stability property. This special type of worst-case disruption supports us in 
    obtaining \eqref{Q_sdp_P_diagonal} by showing the following relations:
    \begin{align}
        &\Gamma(P \in \Sbb^N_{\succeq 0}) \overset{(i)}{\leq} \Gamma(P \succeq 0, P ~\text{is diagonal}) \overset{(ii)}{=} \Gamma(P \in \Sbb^N) \non \\
         \overset{(iii)}{=} & \hat Q(\Mc,\Ac)
        \overset{(iv)}{\leq} Q(\Mc, \Ac) \overset{(v)}{\leq}  \Gamma(P \in \Sbb^N_{\succeq 0}), \label{Q_Gamma_all_relation}
    \end{align}
    where
    \begin{itemize}
        \item $(i)$ is obtained since $\Gamma(P \succeq 0, P ~\text{is diagonal})$ has a stricter constraint on $P$ compared to the constraint on $P$ in $\Gamma(P \in \Sbb^N_{\succeq 0})$, given that they minimize the same objective function. 
        \item $(ii)$ is obtained from the equivalence between the statements P2) and P5) in Proposition~\ref{thm_anders}.
        \item $(iii)$ is obtained from the cyclo-dissipative systems theory \cite{moylan2014dissipative} and lossless S-Procedure \cite[Chapter 4]{petersen2000robust} (see \cite[Theorem 4.4]{anand2024risk} for more detailed proof).
        \item $(iv)$ is obtained since $\hat Q(\Mc,\Ac)$ is constructed by adding one more constraint to $Q(\Mc, \Ac)$ in a maximization problem.
        \item $(v)$ is obtained from \eqref{Q_upper_bound}.
    \end{itemize}
    Clearly, \eqref{Q_Gamma_all_relation} results in all the values being equal, leading to \eqref{Q_sdp_P_diagonal}. It is worth noting that the values of three types of matrix $P$ in \eqref{Q_Gamma_all_relation} are not necessarily equal. The following part b) is dedicated to showing \eqref{Q_sdp_Ainverse}.

    b) The result of \eqref{Q_sdp_P_diagonal} enables us to leverage the equivalence between the statements P4) and P5) in Proposition~\ref{thm_anders}. Consequently, $\Gamma(P \succeq 0, P ~\text{is diagonal})$ can be rewritten as follows:
    \begin{subequations} \label{Q_Ainverse_org}
    \begin{align} 
        &\Gamma(P \succeq 0, P ~\text{is diagonal}) = \min_{\gamma_m \in \Rbb_{\geq 0}, \psi \in \Rbb_{\geq 0}^\alpha}  \sum_{m \in \Mc} \gamma_m \delta_m^2 +  \textbf{1}^\top \psi \\ 
     &\text{s.t.}~~ 
     E_\Ac^\top A^\top L^{-\top} W^2 L^{-1} A E_\Ac - \epsilon \, \text{diag}(\psi) \preceq 
     \non \\
     & \hspace{1cm} 
     E_\Ac^\top A^\top L^{-\top} \bigg[ \sum_{m \in \Mc} \gamma_m \textbf{diag}\big(e_m \big) \bigg] L^{-1}A E_\Ac. \label{Q_Ainverse_cons}
    \end{align}
    \end{subequations}

    On the other hand, the Katz-like centrality measures \eqref{Katz_monitor}-\eqref{Katz_performance} can alternatively computed as follows:
    \begin{align}
    K_\delta &= \textbf{diag}(\delta)^{-1} \sum_{i = 0}^\infty ~ (D_{\rm in}^{-1} A)^{i} ~~ D_{\rm in}^{-1} A \non \\
    &= \textbf{diag}(\delta)^{-1}(I - D_{\rm in}^{-1}A)^{-1}D_{\rm in}^{-1} A \non \\
    &= \textbf{diag}(\delta)^{-1}(D_{\rm in} - A)^{-1} A \non \\
    &= \big( L \, \textbf{diag}(\delta) \big)^{-1} A, \label{Katz_monitor_mod}  
    \\
    K_W &=   
    W \sum_{i=0}^\infty (D_{\rm in}^{-1}A)^i ~~ D_{\rm in}^{-1}A
      = W L^{-1} A, \label{Katz_performance_mod}
    \end{align}
    where the second equality in \eqref{Katz_monitor_mod} occurs due to $(I - D_{\rm in}^{-1}A) \sum_{i = 0}^\infty ~ (D_{\rm in}^{-1} A)^{i} = I - D_{\rm in}^{-1}A + D_{\rm in}^{-1}A - (D_{\rm in}^{-1}A)^2 + (D_{\rm in}^{-1}A)^2 - (D_{\rm in}^{-1}A)^3 + ... = I$ and the last equality in \eqref{Katz_monitor_mod} comes from the definition of the in-degree Laplacian matrix, i.e., $L = D_{\rm in} - A$. The same derivation applies to \eqref{Katz_performance} to obtain \eqref{Katz_performance_mod}.
    It is worth noting that $-L$ is Hurwitz, so it is invertible.  
    By substituting \eqref{Katz_monitor_mod}-\eqref{Katz_performance_mod} into \eqref{Q_Ainverse_cons}, one obtains \eqref{Q_Ainverse_org} equivalent to \eqref{Q_min_sdp_short}, concluding the proof.
\end{proof}

\end{document}

%% file: deflatex3.tex
\def\Ac{{\mathcal A}}

\def\Ec{{\mathcal E}}

\def\Gc{{\mathcal G}}

\def\Hc{{\mathcal H}}

\def\Lc{{\mathcal L}}

\def\Mc{{\mathcal M}}

\def\Nc{{\mathcal N}}

\def\Rbb{{\mathbb R}}

\def\Sbb{{\mathbb S}}

\def\Vc{{\mathcal V}}

\def\0{{\bf 0}}

\newcommand{\bitem}{\begin{itemize}}
\newcommand{\eitem}{\end{itemize}}
\newcommand{\btabular}{\begin{tabular}}
\newcommand{\etabular}{\end{tabular}}
\newcommand{\bcenter}{\begin{center}}
\newcommand{\ecenter}{\end{center}}
\newcommand{\bea}{\begin{eqnarray}}
\newcommand{\eea}{\end{eqnarray}}
\newcommand{\bean}{\begin{eqnarray*}}
\newcommand{\eean}{\end{eqnarray*}}
\newcommand{\ba}{\left[ \begin{array}}
\newcommand{\ea}{\\ \end{array} \right]}
\newcommand{\bear}{\begin{array}}
\newcommand{\eear}{\\ \end{array}}

\newcommand{\non}{\nonumber}

\newcommand{\Lra}{\Leftrightarrow}

\newcommand*{\QET}{\hfill\ensuremath{\triangleleft}}
\newcommand{\norm}[1]{\left\lVert#1\right\rVert}

\newcounter{subequation}
\def\beasub{\addtocounter{equation}{+1}
\setcounter{subequation}{\value{equation}}
\setcounter{equation}{0}
\renewcommand{\theequation}{\arabic{subequation}\alph{equation}}
\begin{eqnarray}}
\def\eeasub{\end{eqnarray}
\setcounter{equation}{\value{subequation}}
\renewcommand{\theequation}{\arabic{equation}}}

